 \newtheorem{thm}{Theorem}[section]
 \newtheorem{cor}[thm]{Corollary}
 \newtheorem{prop}[thm]{Proposition}
 \newtheorem{lem}[thm]{Lemma}
 \theoremstyle{remark}
 \newtheorem{rem}[thm]{Remark}
 \numberwithin{equation}{section}
\DeclareMathOperator{\sign}{sign}
\DeclareMathOperator{\order}{ord}
  \newcommand{\f}{\mathbb{F}_{q}}
 \newcommand{\np}{\textbf{NP}}
  \newcommand{\rp}{\textbf{RP}}
\begin{document}
\title[On min. distances of ECAG codes]
  { On the minimum distance of elliptic curve codes}

\author{Jiyou Li}
\address{Department of Mathematics, Shanghai Jiao Tong University, Shanghai, P.R.
China and Department of Mathematics, Unversity of Delaware, USA}

\email{lijiyou@sjtu.edu.cn}

\author{Daqing Wan}
\address{Department of Mathematics, University of California, Irvine, CA 92697-3875, USA}
\email{dwan@math.uci.edu}

\author{Jun Zhang }
\address{School of Mathematics, Capital Normal University}
\email{junz@cnu.edu.cn}


\thanks{The work of Jiyou Li is supported by the National Science Foundation of China
(11001170) and Ky and Yu-Fen Fan Fund Travel Grant from the AMS. The
research of Daqing Wan is partially  supported by NSF. This research
of Jun Zhang is supported by the National Key Basic Research Program
of China (2013CB834204), the National Natural Science Foundation of
China (61171082, 10990011 and 60872025).}


\begin{abstract}
Computing the minimum distance of a linear code is one
 of the fundamental problems in algorithmic coding theory. Vardy~\cite{var97} showed that it is an \np-hard problem for general linear codes. In practice, one often uses codes with additional mathematical structure, such as AG codes.
 For AG codes of genus $0$ (generalized Reed-Solomon codes), the minimum distance has a simple explicit formula.
 An interesting result of Cheng~\cite{chengqi}
 says that the minimum distance problem is already \np-hard (under \rp-reduction)
 for general elliptic curve codes (ECAG codes, or AG codes of genus $1$).  In this paper, we show that the minimum
 distance of ECAG codes also has a simple explicit formula if the evaluation set is suitably large (at least
 $2/3$ of the group order).
 Our method is purely combinatorial and based on a new sieving technique from  the first two authors \cite{LW1}. This method
 also proves a significantly stronger version of the MDS (maximum distance separable) conjecture for ECAG codes.



\end{abstract}

\maketitle \numberwithin{equation}{section}
\newtheorem{theorem}{Theorem}[section]
\newtheorem{lemma}[theorem]{Lemma}
\newtheorem{example}[theorem]{Example}
\allowdisplaybreaks

\section{Introduction}
Let $\f^n$ be the $n$-dimensional vector space over the finite field $\f$ with $q$ elements. For any vector $ {x}=(x_1,x_2,\cdots,x_n)\in \f^n$, the \emph{Hamming weight} $\mathrm{Wt}( {x})$ of $ {x}$ is defined to be the number of non-zero coordinates, i.e.,
$$\mathrm{Wt}( {x})=\#\left\{i\,|\,1\leqslant i\leqslant n,\,x_i\neq 0\right\}\ .$$
A \emph{linear $[n,k]$ code} $C$ is a $k$-dimensional linear subspace of $\f^n$. The \emph{minimum distance} $d(C)$ of $C$ is the minimum Hamming weight of all non-zero vectors in $C$, i.e.,
$$d(C)=\min\{\mathrm{Wt}( {c})\,|\, {c}\in C\setminus\{ {0}\}\}\ .$$
A linear $[n,k]$ code $C\subseteq \f^n$ is called a $[n,k,d]$ linear
code if $C$ has minimum distance $d$. A well-known trade-off between
the parameters of a linear $[n,k,d]$ code is the Singleton bound
which states that
$$d\leqslant n-k+1\ .$$
 An $[n,k,d]$ code is called a \emph{maximum distance separable}
  (MDS) code if $d=n-k+1$.
 The \emph{dual code} $C^\bot$ of $C$ is defined as the set
\[
  \left\{ {x}\in \f^n\,|\, {x}\cdot {c}=0\,\mbox{for all } {c}\in C\right\},
\]
where $ {x}\cdot {c}$ is the inner product of vectors $ {x}$ and $ {c}$, i.e.,
\[
     {x}\cdot {c}=x_1c_1+x_2c_2+\cdots+x_nc_n\ .
\]

Computing the minimum distance of a linear code is one of the most
important problems in algorithmic coding theory. It was proved to be \np-hard for
general linear codes in \cite{var97}. The gap version of the problem
was also shown to be \np-hard in \cite{DMS}. And the same paper showed
 that approximating the minimum distance of a linear code cannot
  be achieved in randomized polynomial time to the factor
  $2^{\log^{1-\epsilon}n}$ unless $\mathrm{NP}\subseteq \mathrm{RTIME}
  (2^{\mathrm{polylog}(n)})$. In \cite{CW09}, Cheng and the
  second author derandomized the reduction and showed there is
   no deterministic polynomial time algorithm to
   approximate the minimum distance to any constant factor
    unless $\mathrm{NP}=\mathrm{P}$. And they proved that approximating the minimum distance of a linear code cannot be achieved in deterministic polynomial time to the factor $2^{\log^{1-\epsilon}n}$ unless $\mathrm{NP}\subseteq \mathrm{RTIME}(2^{\mathrm{polylog}(n)})$.

Despite the above complexity results, it is more interesting to compute the minimum distance of linear codes that are used in practical applications.
An important class of such codes is algebraic geometry (AG) codes with parameters $[n,k,d]$ as defined in Section~4.
The minimum distance of such AG codes from algebraic curves of genus $g$ is known to satisfy the inequality
$$n-k-g+1\leq d\leq n-k+1.$$
In the simplest case $g=0$, i.e., generalized Reed-Solomon codes, the minimum distance has the simple formula $d=n-k+1$.
In the next simplest case $g=1$,
either $d=n-k$ or $d=n-k+1$, and Cheng~\cite{chengqi} showed that determining the minimum distance of ECAG codes
between the two options is \np-hard under \rp-reduction. For genus $g\geq 2$, there is no such complexity result so far.
But it is believed to be an \np-hard problem as well.

We are interested in positive results for determining the minimum distance of ECAG codes.
It was shown in \cite{chengqi}, and also in \cite{ZFW} from a different aspect, that
computing the minimum distance of an ECAG code is equivalent to a
subset sum problem (SSP) in the group of rational points on the elliptic curve.
We now make this more precise.

Let $E$ be an elliptic curve over the finite field $\f$. Let $G$ be
the group of $\f$-rational points on the elliptic curve $E$. The Hasse bound shows that
$||G|-(q+1)| \leq 2\sqrt{q}$.
Let $D\subseteq G$ be a nonempty subset of cardinality $n$, which will be our evaluation set for ECAG code.
For a positive integer $1\leq k\leq n< |G|$ and element $b\in G$, let $N(k, b,
D)$ be the number of $k$-subsets $T\subseteq D$ such that
$\sum_{x\in T}x=b$. The counting version of the $k$-subset sum problem for the pair $(G, D)$ is to compute $N(k,b, D)$.
The minimum distance of the ECAG $[n,k]$-code is equal to $n-k$ if and only if the number $N(k,b,D)$
is positive. This $k$-subset sum problem is in general \np-hard if the evaluation set $D$ is small.  On the other hand,
the dynamic programming method implies that there is a polynomial time algorithm to compute
$N(k,b,D)$ if $n=|D|$ is large, say, $n=|G|^{\delta}$ for some constant $\delta>0$.

In this paper, we obtain an asymptotic formula for $N(k,b,D)$ if $n=|D|$ is suitably large, say,
$|D| > (\frac{2}{3} +\epsilon)  |G|$.
As an application, we show that if the cardinality $n$ of the evaluation set is
suitably large (at least $2/3$ of the group order), then the minimum distance of an ECAG code $[n,k]$ is
always $n-k$. We conjecture that the condition $|D| > (\frac{2}{3} +\epsilon)  |G|$ in our
results can be improved to $|D| > (\frac{1}{2} +\epsilon)  |G|$.
Our main technical tool is the sieve method of the first two authors \cite{LW2}.

To describe the asymptotic formula, we introduce more notations.
Let
$\widehat{G}$ be the group of additive characters of $G$.  Note that
$\widehat{G}$ is isomorphic to $G$. Define
$$\Phi(D)=\max_{\chi\in
\widehat{G}, \chi\ne \chi_0}|\sum_{a\in D}\chi{(a)}|.$$

Our main technical result is the following asymptotic formula for $N(k,b,D)$.

\begin{thm}\label{thm1}
Notations as above. We have
\begin{align*}
\left | N(k, b, D)-{|G|^{-1}} {n \choose k}\right| &\leq \frac {|S|}
{|G|}{\Phi(D)+k-1 \choose k}+\frac 1 {|G|}{\frac {n+\Phi(D)}
2\choose k}\nonumber \\ &+\frac 1 {|G|}\sum_{2<d\leq k \atop d\mid
\exp(G)}
 \phi(d){\frac {n+\Phi(D)}{d}+k-1 \choose k},
 \end{align*}
 where $S$ is the set of characters in $\widehat{G}$ which have order greater
 than $k$ and $\exp(G)$ is the exponent of $G$.

\end{thm}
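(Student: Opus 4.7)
The plan is to perform Fourier analysis on the finite abelian group $G$. Writing $D=\{a_1,\ldots,a_n\}$ and using the character orthogonality $\mathbf{1}[x=b] = |G|^{-1}\sum_{\chi\in\widehat{G}}\overline{\chi(b)}\chi(x)$, summing over $k$-subsets of $D$ yields
\[
N(k,b,D) \;=\; \frac{1}{|G|}\sum_{\chi\in\widehat{G}}\overline{\chi(b)}\,e_k\bigl(\chi(a_1),\ldots,\chi(a_n)\bigr),
\]
where $e_k$ is the $k$-th elementary symmetric polynomial. The trivial character $\chi_0$ contributes the main term $|G|^{-1}\binom{n}{k}$, so the task reduces to bounding $|e_k(\chi(D))|$ for each nontrivial $\chi$.

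For nontrivial $\chi$ of order $d$, I would invoke the sieve identity of \cite{LW2} (equivalent to the Newton--Girard formula), expressing $e_k(\chi(D))$ as a signed sum over partitions $\lambda\vdash k$ of $z_\lambda^{-1}\prod_i p_i(\chi)^{m_i(\lambda)}$, where $p_i(\chi)=\sum_{a\in D}\chi^i(a)$. The decisive observation is that $p_i(\chi)=n$ when $d\mid i$, while $|p_i(\chi)|\leq\Phi(D)$ otherwise, since $\chi^i$ is then a nontrivial character.

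The error bound then splits by character order, matching the three error terms of the theorem. First, when $\mathrm{ord}(\chi)>k$, none of $p_1,\ldots,p_k$ equals $n$, so every $|p_i(\chi)|\leq\Phi(D)$, and the identity $\sum_{\lambda\vdash k}N^{\ell(\lambda)}/z_\lambda=\binom{N+k-1}{k}$ gives $|e_k(\chi)|\leq\binom{\Phi(D)+k-1}{k}$; summing over the $|S|$ such characters produces the first term. Second, when $\mathrm{ord}(\chi)=2$, set $A=\#\{a\in D:\chi(a)=1\}$ and $B=n-A$ so that $|A-B|\leq\Phi(D)$. Then $e_k(\chi)=[t^k](1+t)^A(1-t)^B$; the factorization $(1+t)^A(1-t)^B=(1-t^2)^{\min(A,B)}(1\pm t)^{|A-B|}$ combined with a Vandermonde-type comparison yields $|e_k(\chi)|\leq\binom{\max(A,B)}{k}\leq\binom{(n+\Phi(D))/2}{k}$. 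Third, when $2<\mathrm{ord}(\chi)=d\leq k$ and $d\mid\exp(G)$, partition $D$ by cosets of $\ker\chi$ into blocks of sizes $n_0,\ldots,n_{d-1}$; the bounds $|p_i(\chi)|\leq\Phi(D)$ for $1\leq i<d$ give $|n_j-n/d|\leq(d-1)\Phi(D)/d$ via the inverse Fourier transform on $\mu_d$, and a coefficient analysis of $\prod_j(1+\zeta_d^j t)^{n_j}$ produces $|e_k(\chi)|\leq\binom{(n+\Phi(D))/d+k-1}{k}$. Since there are $\phi(d)$ characters of each such order $d$ (in the cyclic case, extended to the elliptic setting by standard counting), the third error term follows.

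The main obstacle will be the last case. The direct Newton-identity estimate produces the bound $[t^k](1-t)^{-\Phi(D)}(1-t^d)^{-(n-\Phi(D))/d}$, which is numerically close to but typically larger than the target $\binom{(n+\Phi(D))/d+k-1}{k}$. Obtaining the stated sharper bound requires exploiting the cancellation among the $d$-th roots of unity appearing as values of $\chi$ on $D$, going beyond the generic triangle inequality applied to power sums.
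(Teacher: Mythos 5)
Your approach is the same as the paper's: the character-orthogonality expansion of $N(k,b,D)$, with the nontrivial-character contribution $e_k(\chi(a_1),\dots,\chi(a_n))$ bounded via the Li--Wan sieve (which in this symmetric situation is exactly the Newton--Girard expansion over partitions, $e_k=\sum_{\lambda\vdash k}\varepsilon_\lambda z_\lambda^{-1}p_\lambda$), followed by the same three-way split according to whether $\mathrm{ord}(\chi)>k$, $=2$, or $2<d\le k$. Your first two cases match the paper's treatment essentially line for line (the paper evaluates the $d=2$ case through its Lemma 2.6 and lands on the same $\binom{(n+\Phi(D))/2}{k}$ via the same Vandermonde comparison, with the same hand-wave for the alternating-sign subcase). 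The one place you stall is instructive: your computation that the direct triangle-inequality estimate for $2<d\le k$ gives $[t^k](1-t)^{-\Phi(D)}(1-t^d)^{-(n-\Phi(D))/d}\le\binom{(n-\Phi(D))/d+\Phi(D)+k-1}{k}$ is exactly what the paper's Lemma 2.6 delivers (substitute $s=\Phi(D)$, $q=n$), and since $(n-\Phi)/d+\Phi-(n+\Phi)/d=\Phi(d-2)/d>0$ for $d\ge 3$, the paper's asserted passage to $\binom{(n+\Phi(D))/d+k-1}{k}$ is not actually justified by its own lemma. So you have correctly diagnosed a genuine imprecision in the stated constant rather than a gap in your own argument; your proposed coset/roots-of-unity refinement is not needed (and the bound $|n_j-n/d|\le(d-1)\Phi(D)/d$ you cite would in any case only give $\binom{(n+(d-1)\Phi(D))/d+k-1}{k}$, not the stated form). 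The honest conclusion is to state the third error term with $(n-\Phi(D))/d+\Phi(D)$ in place of $(n+\Phi(D))/d$; since $\Phi(D)\le 2\sqrt q+1=o(n)$, this changes nothing in the downstream corollaries on minimum distance and the MDS conjecture.
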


We apply this theorem to determine the minimum distance of ECAG
codes (for details see Section~\ref{sec4}) and obtain
\begin{thm}\label{thm1.3}
Suppose that $n\geq (\frac 23+\epsilon)q$ and $q>\frac 4 {\epsilon^2}$,
where $\epsilon $ is positive. There is a positive constant
$C_{\epsilon}$ such that if
$C_{\epsilon}\ln{q}<k<n-C_{\epsilon}\ln{q}$, then ECAG codes $[n,k]$
have the deterministic minimum distance $n-k$.
\end{thm}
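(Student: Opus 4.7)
The plan is to deduce Theorem~\ref{thm1.3} from Theorem~\ref{thm1} by forcing the main term $|G|^{-1}\binom{n}{k}$ to strictly dominate the sum of the three error terms on the stated range of $k$; this makes $N(k,b,D)>0$ uniformly in $b$, which by the subset-sum equivalence recalled in the introduction implies that the ECAG code $[n,k]$ has minimum distance $n-k$. The first input is a bound on $\Phi(D)$: orthogonality on $G$ gives $\sum_{a\in G}\chi(a)=0$ for every nontrivial $\chi\in\widehat{G}$, whence $|\sum_{a\in D}\chi(a)|=|\sum_{a\in G\setminus D}\chi(a)|\leq |G|-n$, so $\Phi(D)\leq |G|-n$. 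Combining this with the Hasse bound $|G|\leq q+1+2\sqrt q$, the hypothesis $n\geq(\tfrac{2}{3}+\epsilon)q$, and $q>4/\epsilon^2$ (which forces $2\sqrt q<\epsilon q$), one gets $\Phi(D)\leq \tfrac{1}{3}q+1$ and hence $\Phi(D)/n\leq \tfrac{1}{2}-\delta$ for some explicit $\delta=\delta(\epsilon)>0$.

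Next I would reduce to $k\leq n/2$ via the involution $T\mapsto D\setminus T$ on $k$-subsets, which yields $N(k,b,D)=N(n-k,\Sigma-b,D)$ with $\Sigma=\sum_{x\in D}x$; the hypothesis $C_\epsilon\ln q<k<n-C_\epsilon\ln q$ is symmetric under $k\leftrightarrow n-k$, so this loses nothing. Under $k\leq n/2$ and $\Phi(D)\leq(\tfrac{1}{2}-\delta)n$, each upper argument $m$ of the binomial coefficients appearing in the error terms of Theorem~\ref{thm1} satisfies $m\leq (1-\delta')n$ for an explicit $\delta'>0$: one checks $\Phi+k-1\leq (1-\delta)n-1$, $(n+\Phi)/2\leq (\tfrac{3}{4}-\tfrac{\delta}{2})n$, and $(n+\Phi)/d+k-1\leq (1-\tfrac{\delta}{3})n-1$ for $d\geq 3$ (with the worst case being $d=3$).

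The key calculation is the elementary inequality $\binom{m}{k}/\binom{n}{k}\leq (m/n)^k$ valid for $0\leq m\leq n$, obtained because each factor in the telescoping product $\prod_{i=0}^{k-1}(m-i)/(n-i)$ is decreasing in $i$ and is thus bounded by its value $m/n$ at $i=0$. Using also the trivial bounds $|S|\leq |G|$ and $\sum_{d\mid\exp(G)}\phi(d)=\exp(G)\leq|G|$, the ratio of the total error to the main term $|G|^{-1}\binom{n}{k}$ is at most
\[
|G|(1-\delta)^k+\bigl(\tfrac{3}{4}-\tfrac{\delta}{2}\bigr)^k+|G|\bigl(1-\tfrac{\delta}{3}\bigr)^k,
\]
which is strictly less than $1$ whenever $k\geq C_\epsilon\ln q$ with $C_\epsilon$ of order $1/\delta(\epsilon)$. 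This yields the required positivity of $N(k,b,D)$. The principal obstacle is the quantitative bookkeeping: extracting an explicit $\delta(\epsilon)>0$ from the Hasse bound under $q>4/\epsilon^2$ and then absorbing the three exponential decays $\alpha^k$ into a single explicit threshold $C_\epsilon\ln q$; both steps are routine once the structural ingredients above are fixed.
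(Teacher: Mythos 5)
Your proposal is correct and follows essentially the same route as the paper's proof (Corollary~\ref{cor3.7} combined with Proposition~\ref{prop2.1}): bound $\Phi(D)\leq |G|-n$ via orthogonality and the Hasse bound, reduce to $k\leq n/2$ by complementation, and show the main term $|G|^{-1}\binom{n}{k}$ dominates because the binomial ratios decay like $\alpha^k$ with $\alpha<1$, so $k>C_\epsilon\ln q$ suffices. Your uniform estimate $\binom{m}{k}/\binom{n}{k}\leq(m/n)^k$ applied to all three error terms at once is a slightly cleaner packaging of the paper's two-range case split ($k\leq q/6$ versus $q/6<k\leq n/2$), but it is the same argument in substance.
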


If we allow the length of the codes to be larger, we then have a
better bound on $k$.

\begin{thm}\label{thm1.2}
If $n\ge q+2$, then for $q>64$ and
 $3<k<q-1$, then ECAG $[n,k]$ codes have the deterministic minimum distance $n-k$.
\end{thm}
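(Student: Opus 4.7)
By the subset-sum characterization of the minimum distance of ECAG codes recalled in the introduction, it suffices to show $N(k,b,D)>0$ for every $b\in G$. The bijection $T\mapsto D\setminus T$ between $k$-subsets and $(n-k)$-subsets of $D$ gives the identity $N(k,b,D)=N(n-k,\,s-b,\,D)$, where $s=\sum_{a\in D}a$. The hypotheses $3<k<q-1$ and $n\ge q+2$ also force $n-k>3$, so this symmetry lets me assume without loss of generality that $k\le n/2$.

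Next I would control $\Phi(D)$ using the fact that $D$ fills almost all of $G$. For every nontrivial $\chi\in\widehat{G}$, orthogonality gives $\sum_{a\in D}\chi(a)=-\sum_{a\in G\setminus D}\chi(a)$, so combining with the Hasse bound $|G|\le q+1+2\sqrt{q}$ and the hypothesis $n\ge q+2$ yields
\[
\Phi(D)\le |G|-n\le 2\sqrt{q}-1,\qquad |S|\le|\widehat{G}|-1=|G|-1\le q+2\sqrt{q}.
\]

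With these bounds I apply Theorem \ref{thm1} and verify that the main term $|G|^{-1}\binom{n}{k}$ strictly exceeds the sum of the three error terms. Writing
\[
\frac{\binom{n}{k}}{\binom{\Phi(D)+k-1}{k}}=\prod_{i=0}^{k-1}\frac{n-k+1+i}{\Phi(D)+i},
\]
each factor is at least $(n-k+1)/(\Phi(D)+k-1)$, and for $k\le n/2$ with $\Phi(D)<2\sqrt{q}\ll n\approx q$ this product easily exceeds $q+2\sqrt{q}\ge|S|$ once $q>64$. The second term $\binom{(n+\Phi(D))/2}{k}$ is at most a $(1/2+O(q^{-1/2}))^{k}$-fraction of the main term, hence minute for $k\ge 4$ and $q>64$. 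The third error, a sum over divisors $d\ge 3$ of $\exp(G)$, decays even faster since $(n+\Phi(D))/d\le n/3+\sqrt{q}$; bounding $\sum_{d\mid\exp(G)}\phi(d)\le\exp(G)\le|G|$ crudely and then using the geometric decay in $d$ shows this contribution is a negligible fraction of $\binom{n}{k}$.

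The main obstacle is the first comparison when $k$ lies in the middle of its range, say $k\asymp\sqrt{q}\log q$, where neither $\binom{n}{k}\ge(n/k)^{k}$ (sharp for small $k$) nor the entropy-type lower bound (sharp for $k\asymp n/2$) is immediately efficient. I would handle this by taking logarithms and invoking Stirling to compare $\log\binom{n}{k}$ with $\log\!\bigl[(q+2\sqrt{q})\binom{2\sqrt{q}+k-1}{k}\bigr]$ uniformly for $3<k\le n/2$, exploiting the fact that the polynomial-in-$k$ factor $\binom{\Phi(D)+k-1}{k}$ has degree only $2\sqrt{q}-2$ in $k$. Once this worst-case inequality is established, the duality reduction completes the full range $3<k<q-1$ and gives $N(k,b,D)>0$ for every $b\in G$, so the $[n,k]$ ECAG code has minimum distance exactly $n-k$.
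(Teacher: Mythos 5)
Your overall strategy coincides with the paper's: reduce to showing $N(k,b,D)>0$ for all $b\in G$ via the subset-sum characterization (Proposition \ref{prop2.1}), use the complementation symmetry $N(k,b,D)=N(n-k,s-b,D)$ to restrict to $k\le n/2$, bound $\Phi(D)\le |G|-n=O(\sqrt q)$ by orthogonality, and then show that the main term $|G|^{-1}\binom{n}{k}$ of Theorem \ref{thm1} dominates the three error terms. This is exactly what the paper does in Corollary \ref{cor3.6}, except that there the range of $k$ is split into $k\le q^{1/3}$, $q^{1/3}<k<(q+2\sqrt q)/6$ and $(q+2\sqrt q)/6\le k\le (q+2)/2$, and in each piece an explicit polynomial or ratio inequality is verified; it is precisely these computations that produce the threshold $q>64$ (and, notably, the paper's corollary only certifies $k\ge 6$, with $k=3$ forcing $q>432$, so the small cases $k=4,5$ need separate numerical care).

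The gap in your proposal is that this quantitative verification --- which is the entire content of the theorem, since the explicit constants $q>64$ and $3<k<q-1$ are the point --- is never carried out. Moreover, the one explicit estimate you do give fails in part of the range: the factorwise bound $\frac{n-k+1+i}{\Phi(D)+i}\ge \frac{n-k+1}{\Phi(D)+k-1}$ is correct, but for $k$ near $n/2$ one has $\Phi(D)+k-1>n-k+1$, so this common lower bound is below $1$ and the resulting product certainly does not ``easily exceed $q+2\sqrt q$.'' The true ratio $\binom{n}{k}\big/\binom{\Phi(D)+k-1}{k}$ is indeed enormous there, but seeing this requires a different comparison, e.g.\ $\binom{\Phi(D)+k-1}{k}=\binom{\Phi(D)+k-1}{\Phi(D)-1}\le (\Phi(D)+k)^{\Phi(D)}$ against $\binom{n}{\lfloor n/2\rfloor}\ge 2^{n}/(n+1)$, or the logarithmic/Stirling comparison you mention. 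You correctly identify the intermediate range of $k$ as ``the main obstacle,'' but you only announce that Stirling would handle it without executing the estimate, and you never check that the inequality actually holds at the boundary cases ($k=4,5$ with $q$ just above $64$), which are delicate. As written, the proposal is a correct plan that matches the paper's route, but it is not yet a proof of the stated theorem.
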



Since one can check the cases $q\leq 64$ by a computer search,
we have a complete result for the minimum distance of the ECAG code $[n,k]$
if $n\geq q+2$. This gives a
new proof of MDS conjecture on ECAG codes, in a purely combinatoric
method.
We now explain this application and its improvement.

Recall that an  $[n,k,d]$ code is called a \emph{maximum distance
separable}
  (MDS) code if $d=n-k+1$.
MDS codes have a lot of advantages~\cite{Mac}.
However, MDS
codes are very rare, and so far, not too many MDS codes have been
found.
The Main Conjecture on MDS Codes states that for every linear $[n, k]$ MDS code over $\f$, if $1 < k < q$,
then $n\leq q+1$, except when $q$ is even and $k=3$ or $k=q-1$, in which cases $n\leq q + 2$.

The most well-known MDS codes are Reed-Solomon codes. Since the
evaluation set of a Reed-Solomon code can not exceed the finite
field, the MDS conjecture always holds in this case. The MDS conjecture
was proved whenever $q\leq 11$ or $k\leq 5$ by using the theory of
finite geometries. Since the most popular candidates for MDS codes
are the Goppa codes constructed from algebraic curves of small
genus and algebraic geometry (AG) codes, people turned to concentrate
the MDS conjecture for AG codes. As AG codes have algebraic and
geometric properties, there are a lot of new algebraic-geometric
methods to apply, while the general MDS property is more of a
combinatorial property. The MDS conjecture for ECAG codes
was first proved by Katsman and Tsfasman
in~\cite{KT}. Munucra \cite{Mun} translated the conjecture for AG
codes to another conjecture concerning the arithmetic of the curves.
He then proved it for codes arising from elliptic curves, and curves
of genus $2$ when $q > 83$. Walker~\cite{Wal} presented a new
approach to the problem in the case of elliptic curves by proving a
statement about the geometry of the curve after a certain embedding.

In the case of hyperelliptic curves, for fixed genus $g$,
Moer~\cite{Boer} showed that MDS conjecture holds when $q$ is big
enough. Chen \cite{Chen} proved that there is a constant $C(g)$
depending only on the genus $g$ such that the MDS conjecture is true
when $q>C(g)$. And later in~\cite{Chen1} Chen and Yau gave an upper
bound of $C(g)$ which not only affirmatively answered the question
asked by Munucra in~\cite{Mun}, but also improved the result
in~\cite{Boer} a lot.

As we pointed out  that the minimum distance of an ECAG code has
only two options and determining it is equivalent to an SSP problem,
the MDS conjecture on ECAG codes is naturally reduced to a subset
sum problem of the group of rational points on the elliptic curve.
By Theorem~\ref{thm1.2}, we have
\begin{thm}
For $q>64$, MDS conjecture for ECAG codes holds.
\end{thm}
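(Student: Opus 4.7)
The plan is to derive the MDS conjecture for ECAG codes over $\f$ with $q>64$ by reading off the minimum distance case by case from Theorem~\ref{thm1.2}, supplemented by ECAG duality on the boundary and a short sieve-based argument for the genuinely sporadic case. First I would recall the standard translation: an ECAG $[n,k]$-code is MDS iff its minimum distance equals $n-k+1$, which in the genus-one setting (where $d\in\{n-k,n-k+1\}$) is the same as the vanishing $N(k,b,D)=0$ for the specific target $b\in G$ determined by the defining divisor. So the whole conjecture is the assertion that $N(k,b,D)>0$ whenever $n$ exceeds the conjectural Singleton-type threshold.

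The main step is an immediate invocation of Theorem~\ref{thm1.2}. Assume, for contradiction, that $C$ is an ECAG $[n,k]$ MDS code with $q>64$ and $n\geq q+2$ in violation of the conjecture. If $3<k<q-1$, then Theorem~\ref{thm1.2} already forces $d(C)=n-k\neq n-k+1$, contradicting MDS. This single invocation settles the whole inner range of~$k$.

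Only $k\in\{2,3,q-1\}$ then remain (dimensions outside $(1,q)$ are irrelevant for the MDS conjecture). For $k=q-1$ and $n\geq q+3$, I would pass to the dual $C^\perp$, an ECAG $[n,n-q+1]$ MDS code whose dimension satisfies $3<n-q+1<q-1$ by the Hasse bound $n\leq q+1+2\sqrt{q}$ and $q>64$; Theorem~\ref{thm1.2} applied to $C^\perp$ then handles this range. For $k=2$ I would use the elementary fact that a degree-$2$ generator of the defining Riemann--Roch space must restrict injectively from $D$ into $\f$ whenever the code is MDS, forcing $n\leq q$.

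The genuinely hard remaining case is $k=3$ (and dually $k=q-1$ with $n=q+2$): neither Theorem~\ref{thm1.2} nor duality is directly available. I would handle it by reapplying the full sieve formula of Theorem~\ref{thm1} at $k=3$, using the character-sum bound $\Phi(D)\leq |G|-n=O(\sqrt q)$: the main term $|G|^{-1}\binom{n}{3}$ of order $q^{2}$ then dominates the $O(q^{3/2})$ error terms once $q>64$, forcing $N(3,b,D)>0$ and hence failure of MDS. The parity of $q$ has to be tracked carefully so as to preserve the genuine MDS exception in characteristic two at $(k,n)=(3,q+2)$. This sporadic subcase is the principal technical hurdle; once it is settled, the MDS conjecture for ECAG codes over $\f$ with $q>64$ is complete.
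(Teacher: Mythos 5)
Your architecture is essentially the paper's: the paper obtains this theorem in one line from Proposition~\ref{prop2.1} (MDS $\Leftrightarrow N(k,P,D)=0$) together with Theorem~\ref{thm1.2}, and you are right that this only covers $4\le k\le q-2$, so that $k\in\{2,3,q-1\}$ must be supplied separately (the paper is silent on these here, remarking only in passing that small $k$ can be checked directly and large $k$ reduced by duality). Your $k=2$ injectivity argument and your duality reduction for $k=q-1$, $n\ge q+3$ are both fine.

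The genuine gap is precisely at the point you flag as the principal hurdle, $k=3$ (equivalently $k=q-1$ with $n=q+2$): the sieve of Theorem~\ref{thm1} does \emph{not} give $N(3,b,D)>0$ for all $q>64$. With $n=q+2$ and $\Phi(D)\le 2\sqrt q+1$ the main term is $|G|^{-1}\binom{n}{3}\sim q^2/6$, but the error is not $O(q^{3/2})$: besides $\frac{|S|}{|G|}\binom{\Phi+2}{3}\sim \tfrac43 q^{3/2}$ it contains $\frac{1}{|G|}\binom{(n+\Phi)/2}{3}\sim q^2/48$ and, when $3\mid\exp(G)$, a term $\phi(3)\binom{(n+\Phi)/3+2}{3}/|G|$ of size up to $\sim 8q^2/162$, so the main term wins only by a constant factor and the inequality fails for moderate $q$. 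Indeed the paper's own computation in the proof of Corollary~\ref{cor3.6} records that the $k=3$ inequality requires $q>432$ (their iteration reaches $q\ge 64$ only at $k=6$). Hence for $64<q\le 432$ your argument leaves $k=3$ (with $n=q+2$ for $q$ odd, $n=q+3$ for $q$ even) and its dual unproved; you need an input other than the sieve there, e.g.\ the classical finite-geometry theorem, cited in the paper's introduction, that the MDS conjecture holds for all linear codes of dimension $k\le 5$ (which also disposes of $k=2$), or the Katsman--Tsfasman/Munuera results for elliptic codes. Note also that Theorem~\ref{thm1.2} itself asserts the range $3<k<q-1$ while its proof via Corollary~\ref{cor3.6} only establishes $6\le k<q-1$, so $k=4,5$ inherit the same issue if one insists on a self-contained argument.
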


By Theorem~\ref{thm1.3}, if some restriction on the dimension $k$ is allowed,
we can significantly improve the lower bound $q+2$.

\begin{thm}\label{thm1.5}
 Suppose that $n\geq (\frac 23+\epsilon)q$ and $q>\frac 4 {\epsilon^2}$,
where $\epsilon$ is
  positive. Then there is a positive constant $C_{\epsilon}$ such
that if
 $C_{\epsilon}\ln{q}<k<n-C_{\epsilon}\ln{q}$, then
 there is no MDS ECAG code with parameter $[n,k]$.
\end{thm}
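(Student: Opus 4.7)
The plan is to deduce Theorem~\ref{thm1.5} directly from Theorem~\ref{thm1.3}, using only the Goppa/AG bound specialized to genus one. So the present statement is essentially a one-line corollary of the main technical result already obtained.

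First I would record the well-known dichotomy for elliptic curve AG codes: the general AG bound with $g=1$ forces every ECAG $[n,k,d]$ code over $\f$ to satisfy
\[
n-k \leq d \leq n-k+1,
\]
so $d$ takes only one of two values. In particular, such a code is MDS (that is, $d = n-k+1$) if and only if $d \neq n-k$.

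Next I would invoke Theorem~\ref{thm1.3} under the matching hypotheses $n \geq (\tfrac{2}{3}+\epsilon)q$, $q > 4/\epsilon^2$, and $C_{\epsilon}\ln q < k < n - C_{\epsilon}\ln q$. That theorem pins the deterministic minimum distance of any such ECAG $[n,k]$ code at $d = n-k$, which by the dichotomy above rules out the MDS value $n-k+1$. Hence no MDS ECAG code with parameters $[n,k]$ can exist in this range, and the same constant $C_{\epsilon}$ serves for Theorem~\ref{thm1.5}.

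There is essentially no new obstacle in the present step; the real work sits upstream, where Theorem~\ref{thm1} is used to show that the subset-sum count $N(k,b,D)$ is strictly positive for every target $b \in G$. The crucial input there is an upper bound on $\Phi(D) = \max_{\chi \neq \chi_0}\bigl|\sum_{a \in D}\chi(a)\bigr|$; combining the Hasse bound $|G| \leq q+1+2\sqrt{q}$ with the obvious estimate coming from the complement $D^{c}$ gives $\Phi(D) \leq |G|-n + O(\sqrt{q})$, which is small enough relative to $n$ precisely when $n > (\tfrac{2}{3}+\epsilon)q$ to dominate the three error terms in Theorem~\ref{thm1} once $k \geq C_{\epsilon}\ln q$. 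That is why the hypothesis $n \geq (\tfrac{2}{3}+\epsilon)q$ and the logarithmic window on $k$ are both essential, even though they do not reappear explicitly in the one-line deduction above.
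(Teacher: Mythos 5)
Your proof is correct and follows essentially the same route as the paper: the authors also obtain this theorem as an immediate consequence of the positivity of $N(k,b,D)$ (Corollary~\ref{cor3.7}, which underlies Theorem~\ref{thm1.3}) combined with the genus-one dichotomy $d\in\{n-k,\,n-k+1\}$ and the equivalence of Proposition~\ref{prop2.1}. Your closing remarks about the bound $\Phi(D)\leq |G|-n$ and the logarithmic window on $k$ accurately describe where the real work happens upstream.
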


For small $k$, one can directly check if the ECAG code is MDS or not. For large $k$, by the
duality, it can be reduced to the former case. From Theorem~\ref{thm1.5}, we shall see that to get a long MDS code for fixed alphabet size $q$, Reed-Solomon codes are always the best choices.

This paper is organized as follows. Section~2 recalls the sieve method of the first two authors. Section~3 uses the sieve method to get an estimate of counting subset sum problems on any large subset of the rational point group of an elliptic curve. And Section~4 describes the relation between minimum distance of ECAG codes and subset sum problems on the evaluation set of the ECAG code. The main theorems of this paper then follow.

\section{A distinct coordinate sieving formula}

In this section we introduce a sieving formula discovered by Li-Wan
\cite{LW1}. It significantly improves the classical
inclusion-exclusion sieve in several important cases. We recite it
here without proof. For details and related applications, we refer
to \cite{LW1,LW2}. Before we present the sieving formula, we
introduce some notations valid for the whole paper.

\begin{itemize}
  \item Let $D$ be an alphabet set, $X$ a finite set of vectors of length $k$ over $D$.
  \item Denote $\overline{X}=\{(x_1,x_2,\cdots,x_k)\in X\ | \ x_i\ne x_j, \forall i\ne j\}$ the pairwise distinct component subset.
  \item Let $S_k$ be the symmetric group on $\{1,2,\cdots, k\}$. For $\tau\in S_k$, the $\sign$ function is defined to be $\sign(\tau)=(-1)^{k-l(\tau)}$, where $l(\tau)$ is the number of
 cycles of $\tau$ including the trivial cycles which have length $1$.
  \item Let $\tau=(i_1i_2\cdots i_{a_1})(j_1j_2\cdots j_{a_2})\cdots(l_1l_2\cdots l_{a_s})$ with $1\leq a_i, 1 \leq i\leq s$ be any permutation, denote the $\tau$-symmetric subset
      \begin{equation}\label{1.1}
        \begin{array}{rl}
           X_{\tau}=&\left\{
(x_1,\dots,x_k)\in X\,|\,
 x_{i_1}=\cdots=x_{i_{a_1}},\cdots,\right. \\
   &\left.x_{l_1}=\cdots=x_{l_{a_s}}\right\}.
        \end{array}
      \end{equation}

  \item Let $f(x_1,x_2,\dots,x_k)$ be a complex valued function defined on $X$. Denote the distinct sum
$$F=\sum_{x \in \overline{X}}f(x_1,x_2,\dots,x_k), $$
and the $\tau$-symmetric sum
\[
F_{\tau}=\sum_{x \in X_{\tau} } f(x_1,x_2,\dots,x_k).
\]
\end{itemize}

We now present the sieving formula found in \cite{LW1}.

\begin{thm} \label{thm1.0}
Let $F$ and $F_{\tau}$ be defined as above. Then
\begin{align}
  \label{1.5} F=\sum_{\tau\in S_k}{\sign(\tau)F_{\tau}}.
    \end{align}
 \end{thm}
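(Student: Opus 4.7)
The plan is to interpret the sieving formula as a Möbius inversion on the lattice $\Pi_k$ of set partitions of $\{1,2,\dots,k\}$. For each tuple $x=(x_1,\dots,x_k)\in X$, associate the partition $P(x)\in\Pi_k$ whose blocks are the fibers of the map $i\mapsto x_i$; then $x\in\overline{X}$ exactly when $P(x)$ equals $\widehat{0}$, the partition into $k$ singletons. For $\tau\in S_k$, let $\pi(\tau)\in\Pi_k$ be its cycle partition. The definition of $X_\tau$ in (1.1) says precisely that each cycle of $\tau$ lies inside a single block of $P(x)$, i.e.\ $\pi(\tau)\le P(x)$ in the refinement order.

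First I would swap the order of summation on the right-hand side of (1.5):
\[
\sum_{\tau\in S_k}\sign(\tau)F_\tau=\sum_{x\in X}f(x_1,\dots,x_k)\sum_{\tau\in S_k,\ \pi(\tau)\le P(x)}\sign(\tau).
\]
The problem then reduces to the combinatorial identity
\[
\sum_{\tau\in S_k,\ \pi(\tau)\le P}\sign(\tau)=\begin{cases}1,&P=\widehat{0},\\ 0,&\text{otherwise},\end{cases}
\]
for every $P\in\Pi_k$. To prove it, suppose $P$ has blocks $B_1,\dots,B_r$ of sizes $n_1,\dots,n_r$. The permutations with $\pi(\tau)\le P$ are exactly the commuting products $\tau=\tau_1\tau_2\cdots\tau_r$ with $\tau_i\in\mathrm{Sym}(B_i)$. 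Since $k=\sum_i n_i$ and $l(\tau)=\sum_i l(\tau_i)$ both split additively across blocks, the sign $\sign(\tau)=(-1)^{k-l(\tau)}=\prod_i(-1)^{n_i-l(\tau_i)}$ is multiplicative across the blocks as well.

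Consequently the inner sum factors:
\[
\sum_{\tau:\ \pi(\tau)\le P}\sign(\tau)=\prod_{i=1}^r\sum_{\sigma\in S_{n_i}}(-1)^{n_i-l(\sigma)}.
\]
I would then invoke the classical cycle generating identity $\sum_{\sigma\in S_n}z^{l(\sigma)}=z(z+1)(z+2)\cdots(z+n-1)$ and specialize to $z=-1$: the value is $-1$ when $n=1$ and $(-1)(0)(1)\cdots(n-2)=0$ when $n\ge2$, so $\sum_{\sigma\in S_n}(-1)^{n-l(\sigma)}$ equals $1$ if $n=1$ and $0$ if $n\ge2$. Hence the product vanishes unless every block of $P$ is a singleton, which is the desired dichotomy. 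Substituting back leaves only the tuples $x$ with $P(x)=\widehat{0}$, each contributing $f(x)$, giving $\sum_\tau\sign(\tau)F_\tau=\sum_{x\in\overline{X}}f(x)=F$.

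No step here is technically deep; the main thing to get right is the sign bookkeeping, namely that the global sign $(-1)^{k-l(\tau)}$ really is multiplicative along a block structure \emph{coarser} than the cycle structure of $\tau$, which is immediate once one observes that both $k$ and $l(\tau)$ are sums over the blocks of $P$. The one conceptual kernel is the recognition that the inner sign sum is a product of Stirling-cycle specializations at $z=-1$; that is what makes the distinct-coordinate sieve collapse so neatly and, in applications, makes it far tighter than naive inclusion-exclusion.
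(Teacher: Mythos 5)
Your proof is correct and complete; note that the paper itself recites Theorem \ref{thm1.0} without proof, deferring to \cite{LW1}, and your partition-lattice argument is essentially the standard one given there: swap the order of summation, observe that the permutations $\tau$ with $\pi(\tau)\le P(x)$ form the Young subgroup $\prod_i \mathrm{Sym}(B_i)$, and factor the signed sum block by block. The only stylistic remark is that your final step is heavier than necessary: once the inner sum factors as $\prod_i\sum_{\sigma\in S_{n_i}}\sign(\sigma)$, each factor with $n_i\ge 2$ vanishes immediately because the alternating group has index $2$ in $S_{n_i}$, so the Stirling-cycle generating function $z(z+1)\cdots(z+n-1)$ specialized at $z=-1$ is not needed (though it is a perfectly valid way to see the same cancellation). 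All the delicate points are handled correctly: the identification of $X_\tau$ with the condition $\pi(\tau)\le P(x)$, the multiplicativity of $(-1)^{k-l(\tau)}$ across blocks, and the fact that the argument nowhere uses symmetry of $X$ or of $f$, so it proves the theorem in the stated generality.
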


We notice that in this formula, there are at most $k!$ terms
(computable in many cases), which is significantly smaller than
$2^{{k\choose 2}}$, the needed number of terms by traditional
sieving approach.

%
%
%
 For $\tau\in S_k$, let $\overline{\tau}$ denote
 the conjugacy class determined by $\tau$ whose elements are
 permutations conjugate to $\tau$.
Conversely, in the case that we denote a conjugacy class by
$\overline{\tau}\in C_k$, $\tau$ is a correspondent representative
permutation. Since two permutations in $S_k$ are conjugate if and
only if they have the same type of cycle structure, $C_k$ is exactly
the set of all partitions of $k$.

The symmetric group of $k$ elements,  $S_k$, acts on $D^k$ naturally
by permuting coordinates. Given $\tau\in S_k$ and
$x=(x_1,x_2,\dots,x_k)\in D^k$,  $\tau\circ
x=(x_{\tau(1)},x_{\tau(2)},\dots,x_{\tau(k)}).$
  A subset $X$ in $D^k$ is defined to be \emph{symmetric} if for any $x\in X$ and
any $\tau\in S_k$, $\tau\circ x \in X $. In particular,  if  $X$ is
symmetric and $f$ is a symmetric function under the action of $S_k$,
we then get the following useful counting formula for (\ref{1.5}).
\begin{prop} \label{thm1.1} Let $C_k$ be the set of conjugacy  classes
 of $S_k$.  If $X$ is symmetric and $f$ is symmetric, then
 \begin{align}\label{7} F=\sum_{\tau \in C_k}\sign(\tau) C(\tau)F_{\tau},
  \end{align} where $C(\tau)$ is the number of permutations conjugate to
  $\tau$.
\end{prop}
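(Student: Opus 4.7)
The plan is to deduce Proposition~\ref{thm1.1} directly from Theorem~\ref{thm1.0} by grouping the $k!$ terms of the sum $F = \sum_{\tau\in S_k}\sign(\tau)F_\tau$ according to conjugacy classes. The content is to check that both $\sign(\tau)$ and $F_\tau$ depend only on the conjugacy class $\overline{\tau}$; once this is done, collecting terms gives the factor $C(\tau)$ and the sum collapses to one over $C_k$.

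The sign invariance is immediate: $\sign(\tau) = (-1)^{k-l(\tau)}$ where $l(\tau)$ is the total number of cycles, and conjugate permutations have identical cycle type, hence identical $l(\tau)$. So $\sign$ is already a class function.

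The nontrivial step is the invariance $F_\tau = F_{\tau'}$ whenever $\tau' = \sigma\tau\sigma^{-1}$ for some $\sigma\in S_k$. Here I would expand definitions: if $\tau$ contains a cycle $(i_1 i_2\cdots i_a)$, then $\tau'$ contains the cycle $(\sigma(i_1)\,\sigma(i_2)\cdots\sigma(i_a))$, so the defining constraints of $X_{\tau'}$ in (\ref{1.1}) are exactly $y_{\sigma(i_1)} = \cdots = y_{\sigma(i_a)}$ for each cycle of $\tau$. Thus the coordinate action $y\mapsto \sigma^{-1}\circ y$ sends $X_{\tau'}$ into $X_\tau$, and since $X$ is symmetric this map lands back inside $X$ and is a bijection $X_{\tau'}\to X_\tau$. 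Because $f$ is symmetric under the $S_k$-action, $f(\sigma^{-1}\circ y) = f(y)$, so
\[
F_{\tau'} = \sum_{y\in X_{\tau'}} f(y) = \sum_{x\in X_\tau} f(x) = F_\tau.
\]
With these two facts, Theorem~\ref{thm1.0} regroups as
\[
F = \sum_{\overline{\tau}\in C_k}\ \sum_{\tau'\in\overline{\tau}} \sign(\tau')F_{\tau'} = \sum_{\overline{\tau}\in C_k} C(\tau)\sign(\tau)F_\tau,
\]
which is (\ref{7}).

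The only real obstacle is the bookkeeping in the bijection $X_{\tau'}\to X_\tau$: one has to check that conjugating the permutation by $\sigma$ in $S_k$ corresponds precisely to relabeling the coordinates by $\sigma$ on $D^k$, so that the equality constraints defining $X_\tau$ and $X_{\tau'}$ match up. This is a mechanical expansion from the definitions, but it is where the symmetry hypotheses on $X$ and $f$ are used; no further ingredients are needed.
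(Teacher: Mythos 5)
Your proposal is correct and is exactly the derivation the paper intends (the paper states Proposition~\ref{thm1.1} without proof, as a direct consequence of Theorem~\ref{thm1.0}): one checks that $\sign$ and $\tau\mapsto F_\tau$ are class functions under the symmetry hypotheses and then collects the $k!$ terms by conjugacy class. The only caveat is the usual $\sigma$ versus $\sigma^{-1}$ bookkeeping in matching the relabeled constraints of $X_{\sigma\tau\sigma^{-1}}$ with the coordinate action $(\sigma\circ x)_j=x_{\sigma(j)}$, which you correctly flag as mechanical and which does not affect the argument.
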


For the purpose of our proof, we will also present several
combinatorial formulas.  A permutation $\tau\in S_k$ is said to be
of type $(c_1,c_2,\cdots,c_k)$ if $\tau$ has exactly $c_i$ cycles of
length $i$.  
Denote by $N(c_1,c_2,\dots,c_k)$ to be the number of
$k$-permutations of type $(c_1,c_2,\dots,c_k)$. It is well known
that
$$N(c_1,c_2,\dots,c_k)=\frac {k!} {1^{c_1}c_1! 2^{c_2}c_2!\cdots k^{c_k}c_k!}.$$

\begin{lem} \label{lem2.6}
If we define the generating function
\begin{align*}C_k(t_1,t_2,\dots,t_k)= \sum_{\sum
ic_i=k} N(c_1,c_2,\dots,c_k)t_1^{c_1}t_2^{c_2}\cdots t_k^{c_k},
 \end{align*}
and set $t_1=t_2=\cdots=t_k=q$, then
\begin{equation*}
    \begin{array}{rl}
      C_k(q,q,\dots,q)= & \sum_{\sum
ic_i=k} N(c_1,c_2,\dots,c_k)q^{c_1}q^{c_2}\cdots q^{c_k}\nonumber \\
      =& (q+k-1)_k
    \end{array}
\end{equation*}

 If we set $t_i=q$ for $d\mid i$ and $t_i=s$ for $d\nmid
i$, then
\begin{equation*}
    \begin{array}{rl}
&C_k(\overbrace{s,\cdots,s}^{d-1},q,\overbrace{s,\cdots,s}^{d-1},q,
\cdots)\\
 =& \sum_{\sum
ic_i=k} N(c_1,c_2,\cdots,c_k)q^{c_1}q^{c_2}\cdots s^{c_d}q^{c_{d+1}}\cdots\nonumber \\
=& k!\sum_{i=0}^{\lfloor k/d \rfloor}{\frac{q-s}{d}+i-1\choose
\frac{q-s}{d}-1} {s+k-di-1\choose s-1}\\
\leq& k!{s+k+(q-s)/d-1\choose k}.
    \end{array}
\end{equation*}
\end{lem}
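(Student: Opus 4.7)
The plan is to note that $C_k(t_1,\ldots,t_k)=\sum_{\sigma\in S_k}\prod_i t_i^{c_i(\sigma)}$ and to invoke the exponential formula. Splitting the sum over cycle types and using that there are $(i-1)!$ cyclic arrangements of $i$ labeled points gives the formal identity
\[
\sum_{k\geq 0} C_k(t_1,t_2,\ldots,t_k)\,\frac{x^k}{k!} \;=\; \exp\!\left(\sum_{i\geq 1} \frac{t_i}{i}\, x^i\right),
\]
valid as a power-series identity in the $t_i$. Both parts of Lemma~\ref{lem2.6} will then follow by specializing the weights, simplifying the exponential in closed form, and reading off the coefficient of $x^k$.

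For the first identity, setting $t_i=q$ for every $i$ reduces the right-hand side to $\exp\!\bigl(q\log\tfrac{1}{1-x}\bigr)=(1-x)^{-q}$, whose $x^k$-coefficient is $\binom{q+k-1}{k}$; multiplying by $k!$ produces $q(q+1)\cdots(q+k-1)=(q+k-1)_k$. For the second identity, I take $t_i=q$ when $d\mid i$ and $t_i=s$ otherwise, and use $\sum_{d\mid i,\,i\geq 1} x^i/i=\tfrac{1}{d}\log\tfrac{1}{1-x^d}$ to split the exponent into two explicit logarithms, turning the generating function into
\[
\exp\!\left(s\log\tfrac{1}{1-x}+\tfrac{q-s}{d}\log\tfrac{1}{1-x^d}\right)=(1-x)^{-s}(1-x^d)^{-(q-s)/d}.
\]
Expanding each factor by the negative binomial series and taking the Cauchy product exhibits $[x^k]$ as the sum over $0\leq i\leq\lfloor k/d\rfloor$ of $\binom{(q-s)/d+i-1}{i}\binom{s+k-di-1}{k-di}$; applying $\binom{n}{r}=\binom{n}{n-r}$ to each factor and multiplying by $k!$ yields exactly the equality claimed in the lemma.

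For the inequality, I dominate the double sum term-by-term by the analogous sum coming from $(1-x)^{-s}(1-x)^{-(q-s)/d}$: since $d\geq 1$ forces $s+k-di-1\leq s+k-i-1$ and $\binom{n}{s-1}$ is nondecreasing in $n$ for $n\geq s-1$, replacing $di$ by $i$ in the inner binomial only enlarges each term. Extending the summation from $i\leq\lfloor k/d\rfloor$ to $i\leq k$ and invoking Vandermonde's convolution
\[
\sum_{i=0}^{k}\binom{(q-s)/d+i-1}{i}\binom{s+k-i-1}{k-i}=\binom{s+(q-s)/d+k-1}{k}
\]
then yields the stated bound. The only delicate point is that $(q-s)/d$ may fail to be an integer, but all identities are polynomial in $(q-s)/d$ once the binomials are read as polynomials in their upper argument, so no divisibility hypothesis is needed; the main difficulty is thus bookkeeping rather than structure.
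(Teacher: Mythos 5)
Your proof is correct. The paper states Lemma~\ref{lem2.6} without proof (deferring to the Li--Wan references), and your derivation via the cycle-index exponential formula $\sum_k C_k(t_1,\dots,t_k)x^k/k!=\exp(\sum_i t_i x^i/i)$, specialization to $(1-x)^{-q}$ and $(1-x)^{-s}(1-x^d)^{-(q-s)/d}$, and term-by-term domination plus Vandermonde for the final inequality is exactly the standard argument those references use; your remark that the identities persist for non-integer $(q-s)/d$ when the binomials are read as polynomials in the upper argument is a worthwhile point, since that is how the lemma is actually applied in Section~3.
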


\section{Subset Sum Problem in a Subset of the Rational Point Group}

\begin{lem}[Hasse-Weil Bound] \label{4.3}
Let $E$ be an elliptic curve over the finite field $\f$. Then the
number of rational points on $E$ has the following estimate
\[
  |\#E(\f)-q-1|\leq 2\sqrt{q}.
\]
\end{lem}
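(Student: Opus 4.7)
The plan is to prove the Hasse--Weil bound via the classical Frobenius endomorphism argument. First I would introduce the $q$-th power Frobenius morphism $\phi : E \to E$ sending $(x,y) \mapsto (x^q, y^q)$ and fixing the point at infinity. The defining feature of $\f$-rationality is that $P \in E(\overline{\f})$ is fixed by $\phi$ if and only if $P \in E(\f)$; in other words, $E(\f) = \ker(\phi - 1)$. Because $\phi$ is purely inseparable and $1$ is separable, the difference $\phi - 1$ is a separable isogeny, so the kernel has size equal to the degree: $\#E(\f) = \#\ker(\phi - 1) = \deg(\phi - 1)$.

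The second step is to exploit that the degree function on $\mathrm{End}(E)$ extends to a positive-definite quadratic form. Setting $t = \deg(\phi) + 1 - \deg(\phi - 1) = q + 1 - \#E(\f)$ for the trace of Frobenius, the polarization identity associated with $\deg$ yields $\deg(m\phi - n) = m^2 q - mnt + n^2$ for every pair of integers $(m,n)$. Since $\deg \geq 0$ everywhere, the binary form $qm^2 - tmn + n^2$ is positive semi-definite on $\mathbb{Z}^2$; hence its discriminant satisfies $t^2 - 4q \leq 0$. This is precisely $|t| \leq 2\sqrt{q}$, and unwinding the definition of $t$ gives $|\#E(\f) - q - 1| \leq 2\sqrt{q}$.

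The main obstacle --- which I would import as a black box --- is the positive-definiteness of the degree form on $\mathrm{End}(E) \otimes_{\mathbb{Z}} \mathbb{R}$. The cleanest way is via the dual isogeny: for any $\psi \in \mathrm{End}(E)$ one has $\psi \hat{\psi} = [\deg \psi]$, so positivity of $\deg$ reduces to $\psi\hat{\psi}$ being multiplication by a non-negative integer, and together with the additivity $\widehat{\psi_1 + \psi_2} = \hat{\psi}_1 + \hat{\psi}_2$ one recovers the quadratic-form structure. Since the result is entirely standard (see, e.g., Silverman's textbook) and is only used as an input to the Hasse bound $||G| - (q+1)| \leq 2\sqrt{q}$, I would expect the paper simply to cite it rather than reproduce the proof.
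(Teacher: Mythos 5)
Your sketch is the standard Frobenius argument and it is correct: $\phi-1$ is separable (its differential is $-1\neq 0$), so $\#E(\f)=\deg(\phi-1)$, and positive semi-definiteness of the quadratic form $m^2q - tmn + n^2$ coming from the degree pairing forces $t^2\le 4q$. The paper in fact offers no proof at all --- Lemma~4.3 is stated as a standard citation, exactly as you anticipated --- so there is nothing to compare beyond noting that your outline correctly fills in what the authors take for granted.
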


\begin{lem}[Structure of Rational Point Group]\label{4.2}
A group $G$ of order $N = q+1-m$ is isomorphic to $E(\f)$ for some
elliptic curve $E$ over $\f$ if and only if one of the following
conditions holds:

(i) $(q,m) = 1$, $|m|\leq 2\sqrt{q}$ and $G \cong \mathbb{Z}/A \times \mathbb{Z}/B$ where $B|(A,m-2)$.

(ii) $q$ is a square, $m = \pm 2\sqrt{q}$ and $G = (\mathbb{Z}/A)^2$ where $A =\sqrt{q}\mp1$.

(iii) $q$ is a square, $p\equiv 1\,\ (\mod 3)$, $m = \pm\sqrt{q}$ and $G$ is cyclic.

(iv) $q$ is not a square, $p = 2$ or $3$, $m = \pm\sqrt{pq}$ and $G$ is cyclic.

(v) $q$ is not a square, $p\equiv 3\,\ (\mod 4)$, $m = 0$ and $G$ is cyclic
or $q$ is a square, $p\equiv 1\,\ (\mod 4)$, $m = 0$ and $G$ is cyclic.

(vi) $q$ is not a square, $p\equiv 3\,\ (\mod4)$, $m = 0$ and $G$ is either cyclic or $G\cong \mathbb{Z}/M \times \mathbb{Z}/2$ where $M =\frac{q+1}{2}$ .

\end{lem}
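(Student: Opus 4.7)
The plan is to follow the classical approach due to Waterhouse, which splits naturally into a necessity half (constraining the possible group structures given that $E(\f)$ has order $N=q+1-m$) and a sufficiency half (constructing elliptic curves realizing each allowed structure). I would not try to redo Waterhouse's full construction from scratch; rather, I would give a complete argument for necessity and then sketch how the supersingular sub-cases arise from Honda--Tate and Deuring lifting.

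For necessity, I would first recall that for every prime $\ell\neq p$, the $\ell$-torsion satisfies $E[\ell]\cong(\mathbb{Z}/\ell)^2$, so the finite abelian group $E(\f)$ has rank at most $2$ and therefore decomposes as $\mathbb{Z}/A\times\mathbb{Z}/B$ with $B\mid A$ and $AB=N=q+1-m$. Next I would invoke the Weil pairing: because $B\mid A$, the entire $B$-torsion $E[B]$ is $\f$-rational, and the Weil pairing $E[B]\times E[B]\to \mu_B$ is surjective and Galois-equivariant, forcing $\mu_B\subset\f^\times$ and hence $B\mid q-1$. Combining $B\mid AB=q+1-m$ with $B\mid q-1$ by subtraction yields $B\mid m-2$, so $B\mid \gcd(A,m-2)$; together with the Hasse bound $|m|\leq 2\sqrt{q}$ this is exactly condition (i). For the supersingular traces $m=\pm 2\sqrt{q}$, $m=\pm\sqrt{q}$, $m=\pm\sqrt{pq}$, $m=0$, one would then analyze the Frobenius $\pi\in\operatorname{End}(E)$ acting on $E$: in the supersingular case $\operatorname{End}(E)\otimes\mathbb{Q}$ is the unique quaternion algebra over $\mathbb{Q}$ ramified at $p$ and $\infty$, and the explicit forms of $\pi$ dictate the extra constraints on $p\bmod 3$, $p\bmod 4$, and on whether $q$ is a square.

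For sufficiency, I would appeal to Deuring's lifting theorem in the ordinary case (lifting to a CM elliptic curve over a number field whose reduction realizes the prescribed Frobenius and hence the prescribed group structure), and to Honda--Tate together with a direct check of isogeny classes in the supersingular case to produce an $E/\f$ in each listed isogeny class, followed by the standard observation that every isogeny class of elliptic curves over $\f$ contains representatives with every group structure consistent with the constraints above. The dividing line is whether the given Weil number $\pi$ (a root of $X^2-mX+q$) lies in an imaginary quadratic order for which the corresponding CM curve has good reduction at a prime above $p$; this is classical.

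The hard part is unquestionably the sufficiency in the supersingular cases (ii)--(vi), where one must verify, case by case, that the quaternion order $\mathbb{Z}[\pi]$ does embed into a maximal order of the relevant quaternion algebra and that an elliptic curve with this endomorphism structure actually exists over the given field. Since the remainder of this paper only uses the rank-$2$ structure $G\cong\mathbb{Z}/A\times\mathbb{Z}/B$ with $B\mid A$ together with the divisibility $B\mid\gcd(A,m-2)$ (both of which come out of the short necessity argument above), my proposal would be to present the Weil-pairing derivation in full and then cite Waterhouse's classification for the remaining realizability statements, rather than reprove a theorem that is not the real object of study here.
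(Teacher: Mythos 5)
The paper does not prove this lemma at all: it is quoted as a known classical result (the classification of orders of $E(\mathbb{F}_q)$ is Waterhouse's theorem, and the refinement to group structures is due to R\"uck and to Voloch; the statement as given matches the form found in Tsfasman--Vl\u{a}du\c{t} and in Schoof's work), and the paper immediately moves on to use only the consequence that $G$ is a product of at most two cyclic groups. So there is nothing in the paper to compare your argument against step by step; the relevant question is whether your outline is sound, and it essentially is. Your necessity argument is the standard one and is correct: rank at most $2$ follows from $E[\ell]\cong(\mathbb{Z}/\ell)^2$, and the Weil pairing forces $\mu_B\subset\mathbb{F}_q^\times$, hence $B\mid q-1$, which combined with $B\mid q+1-m$ gives $B\mid m-2$. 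One small caveat you should make explicit: the Weil-pairing step needs $p\nmid B$, which holds because the $p$-primary part of $E(\overline{\mathbb{F}}_q)$ is cyclic (trivial in the supersingular case), so $p$ cannot divide the ``repeated'' factor $B$. Your decision to cite the literature for the sufficiency half, especially the supersingular cases (ii)--(vi), is reasonable and is in effect what the paper itself does; I would only correct the attribution (Waterhouse gives the admissible traces $m$, i.e., the isogeny classes, while the existence of curves with each admissible \emph{group structure} in each class is R\"uck's and Voloch's contribution) and note that your closing remark is exactly right: the rest of the paper uses only the rank-$2$ decomposition, e.g.\ in bounding the number $\phi(d)$ of characters of order $d$ by $d^2-1$ in Corollary~\ref{cor3.6}.
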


According to Lemma \ref{4.2} on the structure of $E(\f)$, we may
suppose that $G=E(\f)\cong \mathbb{Z}/{n_1}\times \mathbb{Z}/{n_2}$ is a finite
abelian group. By Lemma \ref{4.3}, $G$ has order $q+1+c\sqrt{q}$,
with $|c|\leq 2$. Denote by $\exp(G)$ the exponent of $G$.
 Let $D\subseteq G$ be a nonempty subset of cardinality $n$. Let
$\widehat{G}$ be the group of additive characters of $G$.  Note that
$\widehat{G}$ is isomorphic to $G$. Define $s_{\chi}(D)=\sum_{a\in
D}\chi{(a)}$ and  $\Phi(D)=\max_{\chi\in \widehat{G}, \chi\ne
\chi_0}|s_{\chi}(D)|$. Let $N(k, b, D)$ be the number of $k$-subsets
$T\subseteq D$ such that $\sum_{x\in S}x=b$.
In the following theorem we will give an asymptotic bound for $N(k, b,
D)$ which ensures $N(k, b, D)>0$ when $G-D$ is not too large compared with $G$.

 \begin{thm}\label{lem1.1}Let $N(k, b, D)$ be defined as above.
\begin{align}\label{4.1}
\left | N(k, b, D)-{|G|^{-1}} {n \choose k}\right| &\leq \frac {|S|}
{|G|}{\Phi(D)+k-1 \choose k}+\frac 1 {|G|}{\frac {n+\Phi(D)}
2\choose k}\nonumber \\ &+\frac 1 {|G|}\sum_{2<d\leq k \atop d\mid
\exp(G)}
 \phi(d){\frac {n+\Phi(D)}{d}+k-1 \choose k},
 \end{align}
where $S$ is the set of characters which has order greater than $k$.
  \end{thm}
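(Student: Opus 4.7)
The plan is to expand the indicator of $\sum_i x_i = b$ via character orthogonality on $G$, apply the distinct-coordinate sieving formula (Proposition~\ref{thm1.1}) to each character separately, extract the main term from the trivial character, and control the remaining sum over nontrivial characters by partitioning them according to their order. First I would write
\[
k!\,N(k,b,D) \;=\; \sum_{\vec x \in \overline{X}} \mathbbm 1_{\sum_i x_i = b}
 \;=\; \frac{1}{|G|}\sum_{\chi \in \widehat G}\chi(-b)\sum_{\vec x \in \overline{X}}\prod_{i=1}^k \chi(x_i),
\]
with $X = D^k$. Because $X$ is symmetric and $\vec x \mapsto \prod_i \chi(x_i)$ is a symmetric function under $S_k$, Proposition~\ref{thm1.1} converts the inner sum into $\sum_{\bar\tau \in C_k}\sign(\tau)\,C(\tau)\,F_\tau(\chi)$. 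For a representative $\tau$ of cycle type $(c_1,\dots,c_k)$, the $\tau$-symmetric sum factors by cycle as $F_\tau(\chi) = \prod_{i=1}^k s_{\chi^i}(D)^{c_i}$, since a length-$i$ cycle forces its coordinates to coincide and contributes $\sum_{y\in D}\chi(y)^i = s_{\chi^i}(D)$.

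Next I would extract the main term and bound the errors. The trivial character gives $F_\tau(\chi_0) = n^{l(\tau)}$, and the classical identity $\sum_{\tau \in S_k}\sign(\tau)\,n^{l(\tau)} = n(n-1)\cdots(n-k+1)$ produces exactly $|G|^{-1}\binom{n}{k}$ after dividing by $k!\,|G|$. For a nontrivial character $\chi$ of order $d$, the key observation is that $s_{\chi^i}(D) = n$ when $d\mid i$ and $|s_{\chi^i}(D)| \leq \Phi(D)$ otherwise. When $d > k$, every power $\chi^i$ with $1\leq i\leq k$ is nontrivial, so $|F_\tau(\chi)| \leq \Phi(D)^{l(\tau)}$, and the first formula of Lemma~\ref{lem2.6} (all $t_i = \Phi(D)$) gives $\sum_{\tau \in S_k} \Phi(D)^{l(\tau)} = k!\binom{\Phi(D)+k-1}{k}$; summing over the $|S|$ such characters and dividing by $k!\,|G|$ yields the first error term. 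When $2 \leq d \leq k$ with $d\mid \exp(G)$, one sets $t_i = n$ for $d\mid i$ and $t_i = \Phi(D)$ otherwise, and invokes the second formula of Lemma~\ref{lem2.6}; the $\phi(d)$ characters of order exactly $d$ then account for the third error term, while the $d=2$ case is handled separately (exploiting that $\chi$ is real-valued) to obtain the sharper second term. A triangle inequality over all $\chi \neq \chi_0$ then assembles the stated bound.

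The main obstacle will be the intermediate-order case $2 \leq d \leq k$: one must carefully match the output of Lemma~\ref{lem2.6} against the stated binomial coefficients, and in particular the $d=2$ case requires a sharpening based on the fact that an order-$2$ character is real-valued, so that the relevant sum reduces to $k!\,[z^k](1+z)^{n_+}(1-z)^{n_-}$ with $n_\pm = (n \pm s_\chi(D))/2$, which admits an estimate tighter than what a direct triangle-inequality bound gives. Apart from this sharpening, the remaining work is routine combinatorics and a final triangle inequality.
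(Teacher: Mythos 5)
Your proposal follows essentially the same route as the paper's proof: character orthogonality, the Li--Wan sieve (Proposition~\ref{thm1.1}) applied to each character separately, factorization of $F_\tau(\chi)$ over cycles, and the formulas of Lemma~\ref{lem2.6} with the nontrivial characters partitioned by order (order $>k$, order $2$, and $2<d\leq k$), including the special real-valued treatment of order-$2$ characters that yields the sharper middle term. The only differences are cosmetic: you keep the exact factorization $F_\tau(\chi)=\prod_i s_{\chi^i}(D)^{c_i}$ and phrase the $d=2$ case via the elementary symmetric polynomial $[z^k](1+z)^{n_+}(1-z)^{n_-}$, which is an equivalent (and arguably cleaner) form of the paper's alternating-sign computation from Lemma~\ref{lem2.6}.
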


\begin{proof}
Let $X=D\times D \times \cdots \times D$ be the Cartesian product of
$k$ copies of $D$.
 Let $  \overline{X} =\left\{ (x_1,x_2,\dots,x_{k} )\in D^k \mid
 x_i\not=x_j,~ \forall i\ne j\} \right\}.$ It is clear that $|X|=n^k$ and
$|\overline{X}|=(n)_k$. We have

\begin{align*}
k!N(k, b, D)&={|G|^{-1}} \sum_{(x_1, x_2,\dots x_k) \in
\overline{X}}
\sum_{\chi\in \widehat{G}}\chi(x_1+x_2+\cdots +x_k-b)\\
&={|G|^{-1}} (n)_k+|G|^{-1} \sum_{\chi\ne \chi_0}\sum_{(x_1,
x_2,\cdots x_k) \in\overline{X}}\chi(x_1)\chi(x_2)\cdots \chi(x_k)\chi^{-1}(b)\\
&={|G|^{-1}}  {(n)_k}+{|G|^{-1}} \sum_{\chi\ne
\chi_0}\chi^{-1}(b)\sum_{(x_1,x_2,\dots x_k)
\in\overline{X}}\prod_{i=1}^{k} \chi(x_i).
\end{align*}
Denote $f_{\chi}(x)= f_{\chi}(x_1,x_2,\dots,x_{k})=
\prod_{i=1}^{k}\chi(x_i).$  For
  $\tau\in S_k$,  let
$$F_{\tau}(\chi)=\sum_{x\in X_{\tau}}f_{\chi}(x)=\sum_{x \in X_{\tau}}\prod_{i=1}^{k} \chi(x_i),$$
where $X_{\tau}$ is defined as in (\ref{1.1}). Obviously $X$ is
symmetric and $f_{\chi}(x_1,x_2,\dots,x_{k})$ is normal on $X$.
Applying (\ref{7}) in Corollary \ref{thm1.1}, we get
  \begin{align*}
 k!N(k, b, D)&={|G|^{-1}} {(n)_k}+{|G|^{-1}} \sum_{\chi\ne \chi_0}\chi^{-1}(b) \sum_{\tau\in
C_{k}}\sign(\tau)C(\tau) F_{\tau}(\chi),
  \end{align*}
 where $C_{k}$ is the set of conjugacy classes
 of $S_{k}$, $C(\tau)$ is the number of permutations conjugate to $\tau$.  If $\tau$
 is of type $(c_1, c_2, \dots, c_k)$, then
  \begin{align*}
F_{\tau}(\chi)&=\sum_{x \in X_{\tau}}\prod_{i=1}^{k} \chi(x_i)\\
&=\sum_{x \in X_{\tau}}\prod_{i=1}^{c_1} \chi(x_i)\prod_{i=1}^{c_2}
\chi^2(x_{c_1+2i})\cdots\prod_{i=1}^{c_k} \chi^k(x_{c_1+c_2+\cdots+k i})\\
 &=\prod_{i=1}^{k}(\sum_{a\in D}\chi^i(a))^{c_i}\\
 &= n^{\sum c_i m_i(\chi)} s_{\chi}(D)^{{\sum c_i (1-m_i(\chi))}},
\end{align*}
where $m_i(\chi)=1$ if $\chi^i=1$ and otherwise $m_i(\chi)=0$.

  Now suppose $\order(\chi)=d$ with $d\mid n_1n_2$. Note that
$C(\tau)=N(c_1,c_2,\dots,c_k)$.  In the case $d=2$, $s_\chi(D)$ is
an integer. Applying Lemma \ref{lem2.6}, we have
\begin{align*}
&\sum_{\tau\in C_{k}}\sign(\tau)C(\tau) F_{\tau}(\chi)\\
&=(-1)^k\sum_{\tau\in C_{k}}C(\tau) (-n)^{\sum c_i m_i(\chi)}
(-s_\chi(D))^{{\sum c_i (1-m_i(\chi))}}\\
&=(-1)^k k!\sum_{i=0}^{\lfloor k/2
\rfloor}{\frac{-n+s_\chi(D)}{2}+i-1\choose
i} {-s_\chi(D)+k-2i-1 \choose k-2i}\\
&=k!\sum_{i=0}^{\lfloor k/2 \rfloor}{\frac{n-s_\chi(D)}{2}\choose
i} {s_\chi(D)\choose k-2i}\\
&\leq k!{\frac {n+\Phi(D)} 2\choose k}.
\end{align*}
 The last inequality in the case $s_\chi(D)>0$ is from the identity
$$\sum_{i=0}^k{a \choose i}{b\choose k-i}={a+b\choose k}.$$
In the
case $s_\chi(D)<0$, since the summation has alternative signs, the
inequality follows from a simple combinatorial argument.

In the case $3 \leq d\leq k$, since $|s_\chi(D)|\leq \Phi(D)$, we
have
\begin{align*}
&\sum_{\tau\in C_{k}}\sign(\tau)C(\tau) F_{\tau}(\chi)\\
&\leq\sum_{\tau\in C_{k}}C(\tau) n^{\sum c_i m_i(\chi)}
\Phi(D)^{{\sum c_i (1-m_i(\chi))}}\\
&\leq k!{\frac {n+\Phi(D)}{d}+k-1 \choose k}.
\end{align*}

Similarly, if $\order(\chi)$ is greater than $k$,  then
\begin{align*}
\sum_{\tau\in C_{k}}\sign(\tau)C(\tau) F_{\tau}(\chi)\leq k!{
\Phi(D)+k-1 \choose k}.
\end{align*}

 Let $S$ be the set of characters which have order greater than $k$.
 Summing over all nontrivial characters,  we obtain
\begin{align*}
\left | N(k, b, D)-{|G|^{-1}} {n \choose k}\right| &\leq \frac {|S|}
{|G|}{\Phi(D)+k-1 \choose k}+\frac 1 {|G|}{\frac {n+\Phi(D)}
2\choose k}\nonumber \\ &+\frac 1 {|G|}\sum_{2<d\leq k \atop d\mid
\exp(G)}
 \phi(d){\frac {n+\Phi(D)}{d}+k-1 \choose k},
 \end{align*}
 where $\phi(d)$ is the number of characters in $\widehat{G}$ of order $d$.
 This completes the proof.
\end{proof}

\begin{cor}We have
$$\left| N(k, b, D)-{|G|}^{-1}{n \choose k}\right|\leq {M \choose k},
$$
where $M$ is defined as $$M=\max\left \{ {\Phi(D)+k-1\choose
k},{\frac {n+\Phi(D)}{2} \choose k}, {\frac {n+\Phi(D)}{d}+k-1
\choose k}\right\},$$ and $d$ is the smallest nontrivial divisor of
$|G|$ that is not equal to 2.
\end{cor}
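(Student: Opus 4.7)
The plan is to deduce this corollary directly from Theorem \ref{lem1.1} by majorizing each of the three binomial-coefficient contributions in \eqref{4.1} by the quantity defining $M$, then noting that the resulting coefficients sum to at most $1$.

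First I would handle the third term of \eqref{4.1}, namely
\begin{equation*}
\frac{1}{|G|}\sum_{\substack{2<d\leq k\\ d\mid\exp(G)}}\phi(d)\binom{(n+\Phi(D))/d+k-1}{k}.
\end{equation*}
Since $\binom{(n+\Phi(D))/d+k-1}{k}$ is a decreasing function of $d$, each binomial in this sum is at most the value attained at the smallest admissible $d$. That smallest value of $d$ divides $\exp(G)$, hence divides $|G|$; replacing it with the smallest nontrivial divisor of $|G|$ distinct from $2$ (as in the statement) only weakens the inequality, and the resulting binomial is exactly the third argument in the $\max$ defining $M$. The first two binomials in \eqref{4.1} are already majorized by the first two arguments of that $\max$.

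Applying these uniform majorants to \eqref{4.1} then gives
\begin{equation*}
\left|N(k,b,D)-|G|^{-1}\binom{n}{k}\right|\leq \frac{M}{|G|}\left(|S|+1+\sum_{\substack{2<d\leq k\\ d\mid\exp(G)}}\phi(d)\right).
\end{equation*}
The parenthesized quantity partitions the nontrivial characters of $\widehat G$ into the three classes used in the proof of Theorem \ref{lem1.1}: $|S|$ characters of order exceeding $k$, a contribution $1$ for order $2$, and $\phi(d)$ characters for each order $d$ with $3\leq d\leq k$. Their total is at most $|\widehat G|-1<|G|$, so the prefactor is strictly less than $1$ and the claimed bound follows.

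The argument is pure bookkeeping, so there is no real technical obstacle. The only subtlety worth flagging is that the coefficient of the order-$2$ binomial in \eqref{4.1} appears as $1$ rather than the a priori count of order-$2$ characters; this is the convention already baked into Theorem \ref{lem1.1}, and it matches the single ``$1$'' in the character count above, so no additional adjustment is needed.
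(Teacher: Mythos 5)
The paper states this corollary without proof, and your argument is precisely the intended deduction from Theorem~\ref{lem1.1}: majorize each of the three binomial contributions in \eqref{4.1} by $M$ (using that $\binom{(n+\Phi(D))/d+k-1}{k}$ decreases in $d$ and that the smallest admissible divisor of $\exp(G)$ is at least the smallest nontrivial divisor of $|G|$ other than $2$), then sum the character counts. The one nit is your claim that $|S|+1+\sum_{2<d\leq k}\phi(d)\leq|\widehat G|-1$: when $|G|$ is odd there are no order-$2$ characters and this total equals $|G|$, so the prefactor is exactly $1$ rather than strictly less than $1$ --- but the stated bound still follows, so nothing breaks.
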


\begin{cor}\label{cor3.6} Let $q\geq 64$ and $n=q+2$. For $6\leq k<q-1$, we have $N(k,b,D)>0$ for every $b\in G$.
\end{cor}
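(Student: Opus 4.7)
The plan is to apply Theorem~\ref{lem1.1} at a reduced index via duality and show the main term $|G|^{-1}\binom{n}{k_0}$ strictly dominates the three error terms on the right-hand side of (\ref{4.1}). The complementation bijection $T\mapsto D\setminus T$ yields $N(k,b,D)=N(n-k,\sigma-b,D)$ where $\sigma=\sum_{a\in D}a$; setting $k_0=\min(k,n-k)$, for $k\in[6,q-2]$ we have $k_0\in\{4,5,\ldots,\lfloor n/2\rfloor\}$, so it suffices to show $N(k_0,b',D)>0$ for every $b'\in G$. By Hasse--Weil (Lemma~\ref{4.3}), $|G|\le q+1+2\sqrt{q}$, hence $|G\setminus D|\le 2\sqrt{q}-1$; orthogonality gives $s_\chi(D)=-s_\chi(G\setminus D)$ for nontrivial $\chi$, so $\Phi(D)\le 2\sqrt{q}-1$. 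By Lemma~\ref{4.2}, $G\cong\mathbb{Z}/A\times\mathbb{Z}/B$ with $B\mid A$, so the number of characters of order dividing $d$ is at most $\gcd(d,A)\gcd(d,B)\le d^2$, yielding $\phi(d)\le d^2$.

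Applying Theorem~\ref{lem1.1}, multiplying (\ref{4.1}) by $|G|$, and using $|S|\le|G|$ together with $\phi(d)\le d^2$, the positivity $N(k_0,b',D)>0$ reduces to
\[
\binom{n}{k_0}>|G|\binom{\Phi+k_0-1}{k_0}+\binom{(n+\Phi)/2}{k_0}+\sum_{\substack{3\le d\le k_0\\ d\mid\exp(G)}}d^{2}\binom{(n+\Phi)/d+k_0-1}{k_0}.
\]
Each ratio $\binom{a}{k_0}/\binom{n}{k_0}$ equals the product $\prod_{i=0}^{k_0-1}(a-i)/(n-i)$, with each factor strictly less than $1$; the three summands' ratios to the main term behave like $(\Phi/n)^{k_0}$, $(1/2)^{k_0}$, and $(1/d)^{k_0}$ respectively, decaying geometrically in $k_0$, so for large $q$ or large $k_0$ the inequality is immediate.

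The main obstacle is the boundary case $k_0=4$ at $q=64$, where $G$ can be the extremal rank-$2$ group $(\mathbb{Z}/9)^2$ of order $|G|=81$ with $\exp(G)=9$; then $\Phi\le 15$ and only $d=3$ contributes to the sum, so a direct computation gives
\[
81\binom{18}{4}+\binom{40.5}{4}+9\binom{30}{4}=247\,860+96\,235+246\,645=590\,740<720\,720=\binom{66}{4},
\]
establishing the inequality. The rank-$\le 2$ bound $\phi(d)\le d^2$ is essential: without it (using only $\phi(d)\le|G|$) the $d=3$ term alone would exceed $\binom{n}{k_0}$ in this case. An analogous numerical check handles $k_0=5$, and $k_0\ge 6$ is covered by the geometric decay.
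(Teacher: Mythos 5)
Your overall strategy is the same as the paper's: reduce to $k_0=\min(k,n-k)$ by complementation, bound $\Phi(D)$ by $|G\setminus D|$ via orthogonality, bound $\phi(d)$ using the rank-$\le 2$ structure of $E(\mathbb{F}_q)$, and then beat the error terms of Theorem~\ref{lem1.1} by explicit estimates. In fact you are more careful than the paper at one point: the paper's proof only tests $k=3$ and $k=6$, whereas duality genuinely drags in $k_0=4$ and $k_0=5$ (from $k=q-2$ and $k=q-3$), and your numerical check at $k_0=4$ is correct ($590{,}740<720{,}720$).

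There is, however, a real gap in how you dispose of the boundary case: you verify only the single group $G\cong(\mathbb{Z}/9)^2$ at $q=64$ and implicitly assert it is the worst case, but this does not follow from your stated bounds. A uniform estimate that combines the maximal value $\Phi\le 15$ with all admissible $d\in\{3,4\}$ and $\phi(d)\le d^2$ gives an extra term $16\binom{23.25}{4}\approx 1.48\times 10^{5}$, pushing the total above $\binom{66}{4}=720{,}720$ — so the inequality would \emph{fail} under the naive worst-case aggregation. What saves the argument is a coupling you never state: $\Phi=15$ forces $|G|=81$, which is odd, so $4\nmid\exp(G)$ and the $d=4$ term vanishes; conversely if $4\mid\exp(G)$ then $|G|\le 80$ and $\Phi\le 14$, and if $12\mid\exp(G)$ then $|G|=72$ and $\Phi\le 6$. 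One must run through the orders $66\le|G|\le 81$ (or at least the three regimes just described) to see that every case passes; they do, but the proof as written does not establish this. The same criticism applies, more mildly, to the unverified claims that "an analogous numerical check handles $k_0=5$" and that the ratios decay monotonically enough in $q$ and $k_0$ to cover all $q\ge 64$ and all $k_0$ up to $n/2$ from the single base case; the per-factor bounds $\bigl(\tfrac{a-i}{n-i}\bigr)$ you invoke do work out, but for $k_0$ near $n/2$ the $d=3$ factor is only about $5/6$ rather than $1/3$, so the "geometric decay" needs the actual computation you omit.
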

\begin{proof}
By symmetry it is sufficient to consider the case $3\leq k\leq n/2$.
To ensure $N(k, b, D)>0$, by (\ref{4.1}) it suffices to have
$${n \choose k}> |S|{\Phi(D)+k-1 \choose k}+{\frac {n+\Phi(D)}
2\choose k}+\sum_{2<d\leq k \atop d\mid \exp(G)}
 \phi(d){\frac {n+\Phi(D)}{d}+k-1 \choose k}.$$
  For a nontrivial character $\chi$, $\sum_{g\in G}\chi(g)=0$
 and it follows that  $\Phi(D)=\Phi(G-D)<|G|-|D|\leq 2\sqrt{q}+1$.

 Since $G$ is the
 product of at most two cyclic groups, by the
 definition of $\phi(d)$ we have
 $\phi(d)\leq d^2-1.$
For simplicity, set $K=k^3-2k^2-k+2$. For the case $k\leq q^{1/3}$,
it is sufficient to have
$${q+2 \choose k}-(q+2\sqrt{q}-K){2\sqrt{q}+k-1 \choose k}-{\frac {q+2+2\sqrt{q}}
2\choose k} -K{\frac {q+2\sqrt{q}}{3}+k-1 \choose k}>0.$$ When
$k=3$, one has
$$ 125/216q^3-379/36q^{5/2}-589/18q^2
+593/27q^{3/2}+149/2q+67/3q^{1/2}>0$$ It then suffices to have
$q>432$.

Similarly, when $k=6$, one has $q\geq 64$. This is done by first taking
$K=k^3-2k^2-k+2=140$, we solve that $q\geq 97$. But notice that now
$K$ should
 be $\leq 117$. Then taking $K=117$, we solve $q\ge 79$. Iteratively, we can get $q\geq 64$ finally.

One checks that when $k\leq q^{1/3}$ this function is unimodal on
$k$. For $q^{1/3}<k<(q+2\sqrt{q})/6$, it then suffices to have
$${q+2 \choose k}>(q+2+2\sqrt{q}){\frac{q+2+2\sqrt{q}}2 \choose k},
$$ and for $(q+2\sqrt{q})/6\leq k \leq (q+2)/2$,
$${q+2 \choose
k}>(q+2+2\sqrt{q}){\frac{q+2\sqrt{q}}3+k-1 \choose k}.
$$
It follows from a simple asymptotic analysis and the proof is
complete.
\end{proof}

 A similar argument gives
\begin{cor}\label{cor3.7}Suppose that $n\geq (\frac 23+\epsilon)q$ and
$q>\frac 4 {\epsilon^2}$,
where $\epsilon$ is positive. Then there is a positive constant
$C_{\epsilon}$ such that
 $N(k,b,D)>0$ for every $b\in G$ provided
 $C_{\epsilon}\ln{q}<k<n-C_{\epsilon}\ln{q}$.
\end{cor}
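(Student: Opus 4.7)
The plan is to mimic the proof of Corollary~\ref{cor3.6}: invoke Theorem~\ref{lem1.1} to reduce the positivity of $N(k,b,D)$ to a binomial-coefficient inequality, then verify that inequality under the weaker size hypothesis by tracking the dependence on $\epsilon$.

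First I would bound $\Phi(D)$. Every nontrivial character $\chi\in\widehat{G}$ satisfies $s_\chi(G)=0$, so $\Phi(D)\leq|G|-n$. Combined with the Hasse bound $|G|\leq q+1+2\sqrt{q}$ and the hypotheses $n\geq(2/3+\epsilon)q$, $q>4/\epsilon^2$ (which forces $2\sqrt{q}\leq\epsilon q$), this gives $\Phi(D)\leq q/3+1$ and hence, for $q$ sufficiently large in terms of $\epsilon$,
\[
\gamma\;:=\;\frac{\Phi(D)}{n}\;\leq\;\frac{q/3+1}{(2/3+\epsilon)q}\;<\;\frac{1}{2+2\epsilon}\;<\;\tfrac{1}{2}.
\]
By Theorem~\ref{lem1.1}, $N(k,b,D)>0$ then follows from
\[
\binom{n}{k}>|S|\binom{\Phi(D)+k-1}{k}+\binom{\tfrac{n+\Phi(D)}{2}}{k}+\sum_{\substack{2<d\leq k\\ d\mid \exp(G)}}\phi(d)\binom{\tfrac{n+\Phi(D)}{d}+k-1}{k}.
\]

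I would control the three error terms separately, aiming to make each at most $\tfrac{1}{3}\binom{n}{k}$. Using $|S|\leq|G|=O(q)$ and $\sum_{d\geq 3}\phi(d)\leq|G|$, the first (and dominant) term reduces to showing $\binom{n}{k}\geq 3|G|\binom{\gamma n+k-1}{k}$. The factor-by-factor estimate
\[
\frac{\binom{n}{k}}{\binom{\gamma n+k-1}{k}}\;=\;\prod_{i=0}^{k-1}\frac{n-i}{\gamma n+k-1-i}\;\geq\;\left(\frac{n-k+1}{\gamma n+k-1}\right)^{k}
\]
has base at least $1+\tfrac{2\epsilon}{3+\epsilon}$ in the range $k\leq(1-\gamma)n/3$, so the ratio exceeds $3|G|$ once $k>C_\epsilon\ln q$ with $C_\epsilon$ of order $\epsilon^{-1}$; for $(1-\gamma)n/3<k\leq n/2$ the ratio is already exponential in $n$ by an entropy comparison, dwarfing $|G|$. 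The middle term is a regular binomial with $(n+\Phi(D))/2<3n/4<n$, hence dominated by $\binom{n}{k}$ via an analogous factor-by-factor argument with base $\geq 4/3$. The $d\geq 3$ contributions give a multichoose with argument at most $(1+\gamma)n/3<n/2$, yielding ratio at least $(2-o(1))^k$, which beats $3|G|$ for $k\gg\log q$.

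For $k>n/2$ I would use the complementary bijection $T\mapsto D\setminus T$: if $S_D=\sum_{x\in D}x$, then $N(k,b,D)=N(n-k,S_D-b,D)$, so positivity at length $k$ is equivalent to positivity at length $n-k$, which yields the upper cutoff $k<n-C_\epsilon\ln q$. The main obstacle is choosing a single explicit $C_\epsilon$ that simultaneously handles all three error terms and all $k$ in the prescribed range; this is an elementary asymptotic optimization, but it requires some care at the transition $k\asymp(1-\gamma)n/3$ where the factor-by-factor lower bound hands off to the entropy estimate.
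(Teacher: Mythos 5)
Your overall strategy is exactly the paper's: bound $\Phi(D)\leq|G|-n\leq(\tfrac13-\epsilon)q+2\sqrt q+1$ via $s_\chi(G)=0$, feed this into the inequality from Theorem~\ref{lem1.1}, split into a small-$k$ regime (handled by a factor-by-factor ratio of binomials, giving $k>C\ln q$) and a large-$k$ regime up to $n/2$, and dispose of $k>n/2$ by the complementation $N(k,b,D)=N(n-k,\sum_{x\in D}x-b,D)$. The small-$k$ analysis and the treatment of the order-$2$ term are fine.

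There is, however, one step that fails as written: your claim that the $d\geq3$ contributions have ``argument at most $(1+\gamma)n/3<n/2$, yielding ratio at least $(2-o(1))^k$.'' The upper index of that multichoose is $(n+\Phi(D))/d+k-1$, and for $k$ near $n/2$ this is near $\tfrac{(1+\gamma)}{3}n+\tfrac n2$, which is $(1-c\epsilon)n$ for a small constant times $\epsilon$ --- not $n/2$. In that regime $\binom{n}{k}\big/\binom{(1-c\epsilon)n+O(1)}{k}$ is only about $(1+O(\epsilon))^k$ (equivalently $2^{\Theta(\epsilon n)}$), not $(2-o(1))^k$; at $k=n/2$ and upper index $n-1$ the ratio would literally be $2$. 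This is not a cosmetic issue: this is precisely the term that forces the constant to depend on $\epsilon$, and the hypothesis $q>4/\epsilon^2$ is needed exactly here to guarantee $(\epsilon/2)q>\sqrt q$ so that the upper index stays below $n$ by a margin of order $\epsilon q$. The paper's proof isolates this as the inequality $\binom{(\frac23+\epsilon)q}{k}>(q+2\sqrt q)\binom{(\frac23+\frac\epsilon2)q+\sqrt q}{k}$ for $q/6<k\leq n/2$, which yields only a base $1+c\epsilon$ and hence $C_\epsilon\sim\epsilon^{-1}$. Your argument can be repaired by applying to the $d\geq3$ terms the same entropy (or per-factor $1+c\epsilon$) comparison you already invoke for the first term in the range $k\asymp n/2$, but as stated the claimed $(2-o(1))^k$ bound is false and would wrongly suggest an $\epsilon$-independent cutoff for these terms.
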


\begin{proof}
Similar to the proof of the corollary above, we consider the case $k\leq n/2$.
To ensure $N(k, b, D)>0$, by (\ref{4.1}) it suffices to have
$${n \choose k}> |S|{\Phi(D)+k-1 \choose k}+{\frac {n+\Phi(D)}
2\choose k}+\sum_{2<d\leq k \atop d\mid \exp(G)}
 \phi(d){\frac {n+\Phi(D)}{d}+k-1 \choose k}.$$
  For a nontrivial character $\chi$, $\sum_{g\in G}\chi(g)=0$
 and it follows that  $\Phi(D)=\Phi(G-D)<|G|-|D|\leq (\frac{1}{3}-\epsilon)q+2\sqrt{q}+1$.

 For small $k\leq q/6$
it suffices to have
$${\frac {2q}3 \choose k}-(q+2\sqrt{q}){\frac {q+2+2\sqrt{q}}
2\choose k}>0,$$
i.e.,
$$\frac {\left(2/3 q\right)_k} {((q+2+2\sqrt{q})/2)_k} \geq q+2\sqrt{q},$$
and then
$$\left(\frac {2/3 q} {(q+2+2\sqrt{q})/2}\right)^k \geq q+2\sqrt{q},$$
 which holds when
\[
   k>C\ln q
\]
for some constant $C$.

For $q/6<k\leq n/2=(\frac 13 +\frac \epsilon 2)q$, it suffices to have
$$
{(\frac 23+\epsilon)q \choose k}>(q+2\sqrt{q}){(\frac {2}{3}+\frac \epsilon 2)q+\sqrt{q}\choose k},
$$
which holds when $q>\frac{4}{\epsilon^2}$ and
\[
k>C_{\epsilon}\ln q
\]
for some constant $C_{\epsilon}$. So, the proof is complete.

\end{proof}

From the proof of the above corollary, if follows that
\begin{cor}\label{cor3.8}
Suppose $n\geq (\frac 23+\epsilon)q$, where $\epsilon$ is
positive and $\epsilon\leq 1/3$. When $q$ is large enough
(in application we need to use long length codes, so it is reasonable to assume $q$ is large),
then there is a positive constant $C$ (independent of $\epsilon$ and $q$) such that
 $N(k,b,D)>0$ for every $b\in G$ provided
 $C\ln{q}<k<n-C\ln{q}$.
\end{cor}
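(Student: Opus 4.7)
The plan is to revisit the proof of Corollary~\ref{cor3.7} and show that the $\epsilon$-dependence in the constant $C_\epsilon$ there can be absorbed into the hypothesis ``$q$ is large enough''. By the symmetry $N(k, b, D) = N(n-k, s-b, D)$ (with $s = \sum_{x \in D} x$), it suffices to prove $N(k, b, D) > 0$ in the range $C \ln q < k \leq n/2$. I would split this range at $k = q/6$, just as in Corollary~\ref{cor3.7}.

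For the small $k$ regime $C \ln q < k \leq q/6$, I would reuse the estimate from the proof of Corollary~\ref{cor3.7} verbatim. There, the sufficient condition reduces to
$$\left(\frac{2q/3}{(q + 2 + 2\sqrt q)/2}\right)^k \geq q + 2\sqrt q,$$
whose base tends to $4/3$ as $q \to \infty$ and, crucially, does not involve $\epsilon$. Hence a universal constant $C$ (depending only on this base, essentially $1/\log(4/3)$) makes the inequality true whenever $k > C \ln q$.

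For the large $k$ regime $q/6 < k \leq n/2$, the sufficient condition reduces to
$$\binom{(\frac{2}{3} + \epsilon)q}{k} > (q + 2\sqrt q)\binom{(\frac{2}{3} + \frac{\epsilon}{2})q + \sqrt q}{k}.$$
Writing $A$ and $B$ for the two upper arguments, one has $A - B = \epsilon q/2 - \sqrt q \geq \epsilon q/4$ once $q \geq 16/\epsilon^2$, while $B \leq q$ under $\epsilon \leq 1/3$ for $q$ not too small. A factorwise lower bound then gives
$$\frac{\binom{A}{k}}{\binom{B}{k}} = \prod_{i=0}^{k-1}\frac{A-i}{B-i} \geq \left(1 + \frac{\epsilon}{4}\right)^k.$$
Since $k > q/6$, this is at least $(1 + \epsilon/4)^{q/6}$, which grows exponentially in $q$ while $q + 2\sqrt q$ grows only linearly. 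Therefore the inequality holds for every $q \geq q_0(\epsilon)$ without imposing any extra lower bound on $k$ in this regime.

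Combining the two regimes, the universal constant $C$ obtained from the small $k$ analysis works for every $q \geq q_0(\epsilon)$, which gives the claim. The main obstacle is purely bookkeeping: one must verify that the small $k$ estimate genuinely avoids $\epsilon$-dependence. This is achieved by always invoking only the uniform bounds $n \geq 2q/3$ and $\Phi(D) \leq |G| - n \leq (q + 2\sqrt q)/3$, neither of which gets stronger as $\epsilon$ grows, so the dominant base of the exponential inequality stays $\geq 4/3 - o(1)$ regardless of $\epsilon$.
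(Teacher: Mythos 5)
Your proposal is correct and follows essentially the same route as the paper, which offers no separate argument for this corollary beyond the remark that it ``follows from the proof of the above corollary'': you correctly identify that the small-$k$ regime $k\leq q/6$ yields a constant $C$ independent of $\epsilon$ (the base of the exponential tends to $4/3$), while in the regime $k>q/6$ the ratio $\binom{A}{k}/\binom{B}{k}\geq (1+\epsilon/4)^{q/6}$ overwhelms the linear factor $q+2\sqrt{q}$ once $q\geq q_0(\epsilon)$, so no $\epsilon$-dependent lower bound on $k$ survives there. The only caveat is a harmless bookkeeping slip ($\Phi(D)\leq |G|-n\leq q/3+2\sqrt{q}+1$ rather than $(q+2\sqrt{q})/3$), which does not affect the argument.
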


\section{Minimum Distance of Elliptic Codes and SSP}\label{sec4}
In this section, we discuss the relationship between the minimum distance of ECAG code and SSP
on the group of rational points of the elliptic curve. Using the results in the previous section,
our main theorems in Introduction follow automatically.

We fix some notations for this section:
 \begin{itemize}
  \item\emph{
 $X/\f$ is a geometrically irreducible smooth projective curve of genus $g$  over the finite field $\f$
with function field $\f(X)$. }
 \item \emph{ $X(\f)$ is the set of all $\f$-rational points on $X$.}
 \item\emph{ $D=\{P_{1},P_{2},\cdots,P_{n}\}$ is a proper
subset of rational points $X(\f)$.}
\item\emph{Without any confusion, we also write $D=P_{1}+P_{2}+\cdots+P_{n}$.}
\item\emph{ $G$ is a divisor of degree $k$
($2g-2<k<n$) with $\mathrm{Supp}(G)\cap D=\emptyset$.} \end{itemize}

Let $V$  be a divisor on $X$. Denote by $\mathscr{L}(V)$ the $\f$-vector space of all
rational functions $f\in \f(X)$ with the principal divisor
$\mathrm{div}(f)\geqslant -V$, together with the zero function
(cf.~\cite{Stichtenoth}). It is well-known that $\mathscr{L}(V)$ is
  finite dimensional vector space over $\f$ and $\dim \mathscr{L}(V)=k-g+1$.

The functional AG code $C_{\mathscr{L}}(D, G)$ is defined to be the image of the following evaluation map:
 \[
       ev: \mathscr{L}(G)\rightarrow \f^{n};\, f\mapsto
       (f(P_{1}),f(P_{2}),\cdots,f(P_{n})).\enspace
\]

As functions in $\mathscr{L}(G)$ have at most $\deg{G}$ zeros, the minimum distance of $C_{\mathscr{L}}(D, G)$ is $d\geqslant n-k$. Together with Riemann-Roch theorem, it is easy to see that the functional AG code $C_{\mathscr{L}}(D, G)$ has parameters $[n, k-g+1, d\geqslant n-k]$. By the Singleton bound, we have
\[
   n-k\leq d\leq n-k+g.\
\]

If $X=E$ is an elliptic curve over $\f$, we only have the following two choices
for the minimum distance of $C_{\mathscr{L}}(D, G)$:
\[
  d=n-k,\,\mbox{ or }\, d=n-k+1.\
\]

Let $\mathcal {G}$ be an abelian group with zero element $O$ and $D$
a finite subset of $\mathcal {G}$. For an integer $0<k<|D|$ and an
element $b\in D$, we denote
\[
  N_{\mathcal {G}}(k,b,D)=\#\{S\subseteq D\,|\,\#S=k\, \mbox{ and }\, \sum_{x\in S}x=b\}. \
\]
Computing $N_{\mathcal {G}}(k,b,D)$ is called the counting version of
the \emph{$k$-subset sum problem} ($k$-SSP). In general, the counting
$k$-SSP is \textbf{NP}-hard. If there is no confusion, we simply
denote
\[
    N(k,b,D)=N_{\mathcal {G}}(k,b,D).\
\]

Let $E$ be an elliptic curve defined over $\f$ with a rational
point $O$. The set of rational points $E(\f)$ forms an abelian group
with zero element $O$ (for the definition for the sum of any two
points, we refer to \cite{Silverman}), and it is isomorphic to the
Picard group
 $\mathrm{div}^o(E)/\mathrm{Prin}(\f(E))$ where $\mathrm{Prin}(\f(E))$
 is the subgroup consisting of all principal divisors. 
 Denote by $\oplus$ and $\ominus$ the additive and minus operator in the group $E(\f)$, respectively.
\begin{prop}[\cite{chengqi,ZFW}]\label{prop2.1}
Let $E$ be an elliptic curve over $\f$,
$D=\{P_{1},P_{2},\cdots,P_{n}\}$ a subset of $E(\f)$ such that
rational points (not necessarily distinct) $O,P\notin D$ and let $G=(k-1)O+P$ ($0<k<n$). Endow $E(\f)$ a group structure with the zero element $O$. Then the AG code $C_{\mathscr{L}}(D, G)$ is MDS, i.e., $d=n-k+1$ if and only if
\[
  N(k,P,D)= 0\ .
\]
And the minimum distance $d=n-k$ if and only if
\[
  N(k,P,D)>0\ .
\]

\end{prop}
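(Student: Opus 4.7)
The plan is to use the fact that for an ECAG code one always has $n-k \leq d \leq n-k+1$, so it suffices to show $N(k,P,D) > 0 \iff d = n-k$; the MDS equivalence then follows by contrapositive. Hence I concentrate on characterizing when the code attains a codeword of Hamming weight exactly $n-k$.

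A nonzero codeword $ev(f)$ with $f \in \mathscr{L}(G)$ has weight $n-k$ precisely when $f$ vanishes at exactly $k$ of the evaluation points. Since $\deg G = k$ and any nonzero $f\in\mathscr{L}(G)$ satisfies $(f)_\infty \leq G$ with $\deg(f)_0 = \deg(f)_\infty \leq k$, I would first argue that such an $f$ must have zero divisor $(f)_0 = P_{i_1} + \cdots + P_{i_k}$ (a sum of $k$ distinct points of $D$, each with multiplicity one) and pole divisor $(f)_\infty = G = (k-1)O + P$. The distinctness of the zeros comes from $D$ being a set of distinct points, and the full equality $(f)_\infty = G$ follows from $\deg(f)_\infty = k = \deg G$ combined with $(f)_\infty \leq G$; the hypothesis $\mathrm{Supp}(G) \cap D = \emptyset$ prevents any cancellation that might mask a zero in $D$.

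Next, I would translate the existence of such an $f$ into an additive condition on $E(\f)$ via the Abel--Jacobi isomorphism $E(\f) \cong \mathrm{div}^o(E)/\mathrm{Prin}(\f(E))$ sending $Q \mapsto [Q - O]$. Under this identification, a degree-$0$ divisor $\sum n_Q Q$ is principal if and only if $\bigoplus n_Q Q = O$ in the group $E(\f)$. Applied to $\Delta = (P_{i_1} + \cdots + P_{i_k}) - (k-1)O - P$, principality becomes
\[
P_{i_1} \oplus P_{i_2} \oplus \cdots \oplus P_{i_k} \ominus (k-1)\cdot O \ominus P = O,
\]
which, since $O$ is the identity, reduces to $P_{i_1} \oplus \cdots \oplus P_{i_k} = P$. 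Conversely, any $k$-subset of $D$ summing to $P$ makes $\Delta$ principal, so $\Delta = \mathrm{div}(f)$ for some $f$ with $f \in \mathscr{L}(G)$, yielding a codeword of weight exactly $n-k$.

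The main (mild) obstacle is the bookkeeping between $\mathscr{L}(G)$, effective divisors, and the group law on $E(\f)$ --- particularly making sure that every putative weight-$(n-k)$ codeword genuinely forces the zero divisor to be supported on $k$ \emph{distinct} points of $D$ with no hidden cancellation at $O$ or $P$. Once this is settled, the equivalence $N(k,P,D) > 0 \iff d = n-k$ is immediate, and Proposition~\ref{prop2.1} follows.
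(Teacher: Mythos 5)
Your proposal is correct and follows essentially the same route as the paper: identify weight-$(n-k)$ codewords with functions $f\in\mathscr{L}(G)$ whose divisor equals $(P_{i_1}+\cdots+P_{i_k})-(k-1)O-P$, then convert principality of that degree-zero divisor into the subset-sum condition $P_{i_1}\oplus\cdots\oplus P_{i_k}=P$ via the Abel--Jacobi isomorphism. You merely spell out more explicitly than the paper why $\mathrm{div}(f)\geq -(k-1)O-P+(P_{i_1}+\cdots+P_{i_k})$ forces equality (a degree count), which is a welcome clarification rather than a deviation.
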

\begin{proof}
We have already seen that the minimum distance of $C_{\mathscr{L}}(D, G)$ has two choices: $n-k$, $n-k+1$. So $C_{\mathscr{L}}(D, G)$ is not MDS, i.e., $d=n-k$ if and only if there is a function $f\in\mathscr{L}(G)$ such that the evaluation $ev(f)$ has weight $n-k$. This is equivalent to that $f$ has $k$ zeros in $D$, say $P_{i_1}, \cdots, P_{i_k}$. That is
\[
    \mathrm{div}(f)\geq -(k-1)O-P+(P_{i_1}+\cdots+P_{i_k}),
\]
which is equivalent to
\[
    \mathrm{div}(f)=-(k-1)O-P+(P_{i_1}+\cdots+P_{i_k}).
\]
The existence of such an $f$ is equivalent to saying
\[
  P_{i_1}\oplus\cdots\oplus P_{i_k}=P.
\]
Namely, $N(k,P,D)> 0.$
It follows that the AG code $C_{\mathscr{L}}(D, G)$ is MDS if and only if
$ N(k,P,D)=0.\ $
\end{proof}

\begin{rem}
In general, if $G$ is a divisor of degree $k$ on $E$, then for any rational point $Q\in E(\f)$, as $\deg(G-(k-1)Q)=1$, by Riemann-Roch theorem, there exists
one and only one rational point $P\in E(\f)$ such that $G\sim (k-1)Q+P$. Suppose there exist rational points $Q,P$ such that
$G\sim (k-1)Q+P$ and $P,Q\notin D$. Let $G'=(k-1)Q+P$.
Then the codes $C_{\mathscr{L}}(D, G)$ and $C_{\mathscr{L}}(D, G')$ are
equivalent \cite[Proposition~2.2.14]{Stichtenoth}. Here two codes $C_{1},C_{2}\subseteq \f^n$ are said to be
\emph{equivalent} if there is a vector $a=(a_{1},\cdots,a_{n})\in
(\f^*)^n$ such that
\[
  C_{2}=a\cdot C_{1}=\{(a_{1}c_{1},\cdots,a_{n}c_{n})\,|\,(c_{1},\cdots,c_{n})\in
  C_{1}\}\ .
\]
It is easy to see that two equivalent codes have the same weight distribution and hence the same minimum distance. So it suffices
to consider all AG codes of the form $C_{\mathscr{L}}(D, (k-1)Q+P)$.

\end{rem}



Proposition~\ref{prop2.1} establishes the relation between minimum distance of ECAG code and SSP on the rational point group of the elliptic curve. Together with Corollaries~\ref{cor3.6} and \ref{cor3.7}, we obtain the main results of this paper, Theorems~\ref{thm1.2}-\ref{thm1.5}.

{\bf Acknowledgements.}  This paper was written when the first
author was visiting the Department of Mathematics at the University of
Delaware and the Department of Mathematics at the University of
California, Irvine. The first author would like to thank both
departments for their hospitality.


\bibliographystyle{plain}
\bibliography{MDS}

\begin{thebibliography}{10}

\bibitem{Chen}
H.~Chen.
\newblock On the main conjecture of geometric {MDS} codes.
\newblock {\em International Mathematics Research Notices}, 1994(8):313--318,
  1994.

\bibitem{Chen1}
H.~Chen and S.S.-T. Yau.
\newblock Contribution to munuera's problem on the main conjecture of geometric
  hyperelliptic {MDS} codes.
\newblock {\em Information Theory, IEEE Transactions on}, 43(4):1349 --1354,
  jul 1997.

\bibitem{chengqi}
Q.~Cheng.
\newblock Hard problems of algebraic geometry codes.
\newblock {\em IEEE Transactions on Information Theory}, 54:402--406, 2008.

\bibitem{CW09}
Q.~Cheng and D.~Wan.
\newblock A deterministic reduction for the gap minimum distance
  problem:[extended abstract].
\newblock In {\em Proceedings of the 41st annual ACM symposium on Theory of
  computing}, pages 33--38. ACM, 2009.

\bibitem{Boer}
M.A. de~Boer.
\newblock {MDS} codes from hyperelliptic curves.
\newblock {\em Arithmetic, Geometry and Coding Theory}, pages 23--34, 1996.

\bibitem{DMS}
I.~Dumer, D.~Micciancio, and M.~Sudan.
\newblock Hardness of approximating the minimum distance of a linear code.
\newblock {\em Information Theory, IEEE Transactions on}, 49(1):22--37, 2003.

\bibitem{KT}
G.L. Katsman and M.A. Tsfasman.
\newblock Spectra of algebraic-geometric codes.
\newblock {\em Problemy Peredachi Informatsii}, 23(4):19--34, 1987.

\bibitem{LW1}
J.Y. Li and D.~Wan.
\newblock {A new sieve for distinct coordinate counting}.
\newblock {\em Science China-mathematics}, 53:1--12, 2010.

\bibitem{LW2}
J.Y. Li and D.~Wan.
\newblock Counting subset sums of finite abelian groups.
\newblock {\em Journal of Combinatorial Theory, Series A}, 119(1):170 -- 182,
  2012.

\bibitem{Mac}
F.~MacWilliams and N.~Sloane.
\newblock The theory of error-correcting codes.
\newblock {\em North-Holland}, 2006.

\bibitem{Mun}
C.~Munuera.
\newblock On the main conjecture on geometric {MDS} codes.
\newblock {\em Information Theory, IEEE Transactions on}, 38(5):1573--1577,
  1992.

\bibitem{Silverman}
J.H. Silverman.
\newblock {\em The arithmetic of elliptic curves}, volume 106 of {\em Graduate
  Texts in Mathematics}.
\newblock Springer, Dordrecht, second edition, 2009.

\bibitem{Stichtenoth}
H.~Stichtenoth.
\newblock {\em Algebraic function fields and codes}, volume 254 of {\em
  Graduate Texts in Mathematics}.
\newblock Springer-Verlag, Berlin, second edition, 2009.

\bibitem{var97}
A.~Vardy.
\newblock The intractability of computing the minimum distance of a code.
\newblock {\em Information Theory, IEEE Transactions on}, 43(6):1757--1766,
  1997.

\bibitem{Wal}
J.L. Walker.
\newblock A new approach to the main conjecture on algebraic-geometric {MDS}
  codes.
\newblock {\em Designs, Codes and Cryptography}, 9(1):115--120, 1996.

\bibitem{ZFW}
J.~Zhang, F.W. Fu, and D.~Wan.
\newblock Stopping set distributions of algebraic geometry codes from elliptic
  curves.
\newblock In {\em Proceedings of the 9th Annual international conference on
  Theory and Applications of Models of Computation}, TAMC'12, pages 295--306,
  Berlin, Heidelberg, 2012. Springer-Verlag.

\end{thebibliography}




\end{document}